\pdfoutput=1
\documentclass[amsmath,amssymb,aps,pre,showpacs,twocolumn,groupedaddress]{revtex4-1}

\usepackage{amsthm}
\usepackage{graphicx}
\usepackage[T1]{fontenc}
\usepackage{times}
\usepackage{mathtools}
\usepackage{tensor}
\usepackage{stmaryrd}
\usepackage{MnSymbol}
\usepackage{enumitem}

\theoremstyle{plain}
\newtheorem{theorem}{Theorem}
\newtheorem{lemma}{Lemma}
\newtheorem{corollary}{Corollary}

\newcommand{\ve}[1]{\mathbf{#1}} % Vectors
\newcommand{\p}{\mathrlap{'}} % A prime that takes no space

% Probability
\DeclareMathOperator{\Prob}{\mathbb{P}}

% Polylogarithm
\DeclareMathOperator{\Li}{Li}

% Exponential

% Derivative
\DeclareMathOperator{\diff}{d}

% "Incoming grain" marker for A and B
\newcommand{\inAB}[1]{{#1}\mathrlap{\kern-0.05em\smash{\raisebox{-0.5ex}{\rotatebox{20}{\ensuremath{\scriptscriptstyle\rcurvearrowright}}}}}}
%\newcommand{\inAB}[1]{#1} % Turn off
%
% "Number of grains of sand on node", appearing on the left
%\newcommand{\amountsandleft}[1]{\cdot{#1}}
\newcommand{\amountsandleft}[1]{#1} % Turn off
%
% "Number of grains of sand on node", appearing on the right
%\newcommand{\amountsandright}[1]{\cdot{#1}}
%\newcommand{\amountsandright}[1]{{#1}\cdot}
\newcommand{\amountsandright}[1]{#1} % Turn off
%
% "Link" (two options)
%\newcommand{\link}[2]{\overline{{#1}{#2}}}
\newcommand{\link}[2]{{#1}{#2}} % Turn off
%
% Specify how the differenciation is done
\newcommand{\specdiff}[1]{(#1)}
%
% Some functions: basic versions
\newcommand{\funY}[2]{\tensor*{Y}{_{\amountsandright{#1}}^{\amountsandright{#2}}}}
\newcommand{\funZ}[4]{\tensor*{Z}{_{\link{#1}{#2}}^{\link{#3}{#4}}}}
\newcommand{\funL}[2]{\tensor*{L}{_{\amountsandright{#1}}^{\amountsandright{#2}}}}
\newcommand{\funH}{H}
\newcommand{\funA}[2]{\tensor*[_{\inAB{#2}}^{\amountsandleft{#1}}]{A}{}}
\newcommand{\funB}[2]{\tensor*[_{\inAB{#2}}^{\amountsandleft{#1}}]{B}{}}
\newcommand{\funC}[1]{\tensor*[_{\inAB{#1}}]{C}{}}
\newcommand{\funD}[1]{\tensor*[_{\inAB{#1}}]{D}{}}
\newcommand{\funF}{F}
\newcommand{\funG}{G}
\newcommand{\funPt}{\widetilde{P}}
\newcommand{\funQt}{\widetilde{Q}}
\newcommand{\funPts}{\widetilde{P}^*}
\newcommand{\funQts}{\widetilde{Q}^*}
\newcommand{\funU}{U}
\newcommand{\funV}{V}
%
% Some functions: versions with arguments
\newcommand{\funYy}[2]{\funY{#1}{#2}(\ve{y})}
\newcommand{\funZz}[4]{\funZ{#1}{#2}{#3}{#4}(\ve{z})}
\newcommand{\funLz}[2]{\funL{#1}{#2}(\ve{z})}
\newcommand{\funHxyz}{\funH(w,x,\ve{y},\ve{z})}
\newcommand{\funAxyz}[2]{\funA{#1}{#2}(w,x,\ve{y},\ve{z})}
\newcommand{\funBxyz}[2]{\funB{#1}{#2}(w,x,\ve{y},\ve{z})}
% Some functions: shorthands
\newcommand{\funZtop}[1]{\funZ{2}{2}{i'\!}{#1}}

\newcommand{\funBip}[1]{\funB{i'\!}{#1}}
% Evaluated at one

% Differenciated with respect to $y_{something}$, then evaluated at one.

\newcommand{\diffyH}[1]{\tensor*{h}{_{\specdiff{#1}}}}
% Differenciated with respect to $y_k$, then evaluated at one.

% Differenciated with respect to $z_{something}$, then evaluated at one.

\newcommand{\diffzH}[1]{\tensor*{\eta}{_{\specdiff{#1}}}}
% Differenciated with respect to $z_k$, then evaluated at one.

\begin{document}

\title{A Bottom-Up Model of Self-Organized Criticality on Networks}

\author{Pierre-Andr\'e No\"el}
\email{noel.pierre.andre@gmail.com}
\author{Charles D. Brummitt}
\affiliation{University of California, Davis, California 95616}
\author{Raissa M. D'Souza}
\altaffiliation[Also at ]{The Santa Fe Institute, Santa Fe, NM 87501}
\affiliation{University of California, Davis, California 95616}

\date{October 20, 2013}
%
% 89.75.Hc Networks and genealogical trees
% 02.30.Yy Control theory 
% 05.65.+b Self-organized systems
% 45.70.Ht Avalanches [within 45.70.-n Granular systems]
\pacs{89.75.Hc, 02.30.Yy, 05.65.+b, 45.70.Ht}
%
%****************************************************************
\begin{abstract}
The Bak-Tang-Wiesenfeld (BTW) sandpile process is an archetypal, stylized model of complex systems with a critical point as an attractor of their dynamics. This phenomenon, called self-organized criticality (SOC), appears to occur ubiquitously in both nature and technology. Initially introduced on the 2D lattice, the BTW process has been studied on network structures with great analytical successes in the estimation of macroscopic quantities, such as the exponents of asymptotically power-law distributions. In this article, we take a microscopic perspective and study the inner workings of the process through both numerical and rigorous analysis. Our simulations reveal fundamental flaws in the assumptions of past phenomenological models, the same models that allowed accurate macroscopic predictions; we mathematically justify why universality may explain these past successes. Next, starting from scratch, we obtain microscopic understanding that enables mechanistic models; such models can, for example, distinguish a cascade's area from its size. In the special case of a $3$-regular network, we use self-consistency arguments to obtain a zero-parameters, mechanistic (bottom-up) approximation that reproduces nontrivial correlations observed in simulations and that allows the study of the BTW process on networks in regimes otherwise prohibitively costly to investigate. We then generalize some of these results to configuration model networks and explain how one could continue the generalization. The numerous tools and methods presented herein are known to enable studying the effects of controlling the BTW process and other self-organizing systems. More broadly, our use of multitype branching processes to capture information bouncing back-and-forth in a network could inspire analogous models of systems in which consequences spread in a bidirectional fashion.
\end{abstract}
\maketitle
%
%****************************************************************
\section{Introduction \label{section:introduction}}
Many complex systems affecting modern life, from infrastructure systems like power grids to the natural catastrophes that threaten them, appear to be poised near criticality. For instance, power law distributions seem to characterize the sizes of electrical blackouts~\cite{Dobson2007}, financial fluctuations~\cite{Gabaix2003}, neuronal avalanches~\cite{JuanicoJPhysA2007,Ribeiro2010,Haimovici2013}, earthquakes~\cite{Saichev2004}, landslides~\cite{Hergarten2003}, overspill in water reservoirs~\cite{Mamede2012}, forest fires~\cite{Malamud1998,SinhaRay2000} and solar flares~\cite{Lu1991,Paczuski2005}. Since its introduction in 1987~\cite{Bak1987,Bak1988}, the Bak-Tang-Wiesenfeld (BTW) sandpile has served as a useful paradigm for the self-organizing dynamics that may drive these systems toward critical points (also called self-organized criticality or SOC). The recent prevalence of cascading failures and overloads in networked infrastructures~\cite{Dobson2007,Barrett2012,Helbing2013} is one motivation for studying the sandpile model on random graphs~\cite{sandpile_wattsstrogatz_2d, sandpile_wattsstrogatz_1d, sandpile_ER_annealed, OFC_quenchednetwork, Goh2003, Lee2004,Goh2005, Brummitt2012, Lee2012}. Here we attain deeper understanding of the BTW sandpile model on networks, which provides important lessons and techniques for studies of self-organized critical processes in natural and engineered systems in general. 

We begin in Section~\ref{section:process} with a brief background on complex networks and on the BTW sandpile process on networks. We then provide a macroscopic understanding of how these systems self-organize, including a few past results of interest.

Though past work has significantly advanced our macroscopic understanding of the sandpile process on networks, we demonstrate in Sec.~\ref{section:motivation} some fundamental gaps between this macroscopic view and the microscopic reality. Sections~\ref{section:microscopic}--\ref{section:confmodel} seek to reconcile these macroscopic and microscopic perspectives. Before doing so, we emphasize in Sec.~\ref{subsection:motivation:selfconsistency} the need for our models of the sandpile process to be self-consistent, a feature that plays a central role in the rest of the paper.

In Section~\ref{section:microscopic}, we start from scratch and develop a microscopic, analytical understanding of the BTW sandpile process on networks. We prove rigorous results concerning the inner workings of a cascade, which are summarized in Theorems~\ref{theorem:constraintsroot}--\ref{theorem:AAtilde} and proved in the appendix.

Section~\ref{section:threereg} combines these rigorous results with self-consistency arguments to obtain a zero-parameters model for random $3$-regular networks. This model allows the study of the BTW sandpile process in a regime prohibitively costly to investigate in simulations. As reported in~\cite{Noel2013short}, this approach is, to our knowledge, the first analytical model that can separately calculate cascade size and cascade area.

Section~\ref{section:confmodel} generalizes to configuration model networks some of the results for random $3$-regular graphs in Sec.~\ref{section:threereg}, including the independent calculations of cascade size and cascade area. The same method as in Sec.~\ref{section:threereg} could be used to obtain a zero-parameters model. However, doing so appears more amenable to a case-by-case study (e.g., a mix of nodes of degree $3$ and $4$).

Section~\ref{section:conclusion} discusses the impact of our work in the general context of SOC on networks. We stress that self-consistency may be more important to the success of an analytical approximation---as it captures the self-organization mechanism---than direct consistency with the original system. We also argue that the success of our mechanistic, self-consistent modeling approach is an important proof of concept and that this method's applicability should extend beyond the BTW model.

Proofs and additional justifications are presented in appendix.
%
%****************************************************************
\section{The studied process \label{section:process}}
The BTW sandpile model considers a large collection of nodes that shed load to their neighbors whenever they reach their capacity. We drop a discrete unit of load (called a ``grain of sand'') randomly onto the system after each cascade finishes. A single grain can cause a large cascade (avalanche) of sand to move around the system as the system re-stabilizes. Such cascades typically occur in sizes distributed according to a power law because the system self-organizes to a critical point, as some have argued occurs to some extent in electrical grids~\cite{Dobson2007}, financial markets~\cite{Dupoyet2011}, neuronal avalanches~\cite{JuanicoJPhysA2007,Haimovici2013}, and some natural catastrophes~\cite{Saichev2004, Hergarten2003, SinhaRay2000, Malamud1998, Lu1991, Paczuski2005}. Inspired by the excess load and stress that can cascade among critical infrastructure systems---such as blackouts in power grids, patients in overwhelmed hospitals and excess travelers in transportation---here we study these cascades occurring on a network.

This section presents background material needed for the rest of the article. Section~\ref{subsection:process:networks} covers important concepts concerning complex networks, and Sec.~\ref{subsection:process:BTW} quickly describes the BTW process on networks. Finally, Sec.~\ref{subsection:process:macroscopic} explains how the BTW process self-organizes, and it presents the resulting observables for various network structures. 
%
%********************************
\subsection{Networks with quenched or annealed structure \label{subsection:process:networks}}
\emph{Networks} (graphs) consist of \emph{nodes} (vertices), representing the elements of a system, connected by \emph{links} (edges), representing interactions among those elements. Two nodes are \emph{neighbors} (adjacent) if they are joined by a link, and the \emph{degree} of a node is its number of neighbors. A network's \emph{degree distribution} is the sequence $\{p_k : k \geq 0\}$ such that $p_k$ is the fraction of nodes with degree $k$. The \emph{configuration model} samples random graphs from all graphs on $N$ nodes with a specified degree distribution. Our numerical implementations of the configuration model uses the following algorithm~\cite{Molloy1995,Newman2001}: (i) Create $N$ isolated nodes. (ii) Assign to each node a number of ``half-links'' sampled from the degree distribution $\{p_k : k \geq 0\}$. The total number of half-links should be even (otherwise, discard one node's degree and resample until the total number of half-links is even). (iii) Select two half-links uniformly at random and pair them to form a link. Repeat until no half-links remain. Although this process may create parallel edges and self-loops, they occur so rarely that they can be neglected.

This article studies networks with \emph{quenched structure}: once created, the structure of the network does not change. Thus, neighbors remain neighbors throughout the process. At the other extreme, the network could have \emph{annealed structure}, in which the structure of the network changes at a rate arbitrarily faster than the considered process. In this case, the network ``forgets'' the identities of past neighbors, and at each time step the network is an independent realization from the ensemble of networks. This distinction between quenched and annealed is important: quenched structure allows intricate correlations to appear among the internal states of nearby nodes, whereas such correlations among nodes' states are impossible in a network with annealed structure.
%
%********************************
\subsection{BTW sandpile model on networks \label{subsection:process:BTW}}
Originally introduced on the plane, the BTW sandpile process~\cite{Bak1987,Bak1988} can be generalized to networks in a few natural ways that differ only in specifics~\cite{sandpile_wattsstrogatz_2d, sandpile_wattsstrogatz_1d, sandpile_ER_annealed, OFC_quenchednetwork, Goh2003, Lee2004, Goh2005, Lee2012, Brummitt2012,Hoore2013}. Throughout this paper, we consider the following natural formulation~\cite{Goh2003, Lee2012, Brummitt2012}.

Each node holds grains of sand. We call a node \emph{$i$-sand} if it holds $i$ grains of sand. The \emph{capacity} of a node is the maximal amount of sand that it can hold. In this article, we set the capacity of every node to one less than its degree~\cite{sandpile_ER_annealed, Goh2003, Lee2004, Goh2005, Lee2012, Brummitt2012}. Hence, a $(k-1)$-sand node of degree $k$ is \emph{at capacity}, which means that it holds as much sand as it can withstand. Adding one or more grains to this node would bring it \emph{over capacity}. A node brought over capacity \emph{topples}, which means that it sheds its load by sending one grain to each of its neighbors.

The BTW sandpile process consists of a sequence of \emph{cascades} (avalanches), defined as follows. Drop a grain of sand on a node chosen uniformly at random, called the \emph{root} of the cascade. If this addition does not bring the root over capacity, then that cascade is finished. However, if the root is over capacity, then the root topples and sheds one grain to each of its neighbors. Any node that exceeds its capacity topples in the same way, until all nodes hold a number of grains less than or equal to its capacity. The \emph{size} of a cascade is the number of toppling events; the \emph{area} of a cascade is the number of nodes that topple. Subsequently, we begin a new cascade by dropping a grain on a root node chosen uniformly at random. Details of the simulation algorithm are provided in~\cite{SM_BTW_PRL_from_perspective_of_long}.

Some mechanism is required to avoid inundating the system with sand. In this article, we choose \emph{annealed dissipation}: whenever a grain of sand moves from one node to another, independently and with probability $\epsilon$ this grain disappears from the system. Another viable choice would be \emph{quenched dissipation}: a fraction $\epsilon$ of nodes are \emph{sinks}, and grains of sand sent to sinks disappear rather than pile up. (Note that the difference between these two alternatives is conceptually much less important than the one between networks with quenched structure and with annealed structure.)
%
%********************************
\subsection{Macroscopic understanding of the BTW sandpile process on networks \label{subsection:process:macroscopic}}
\begin{figure}
{\includegraphics{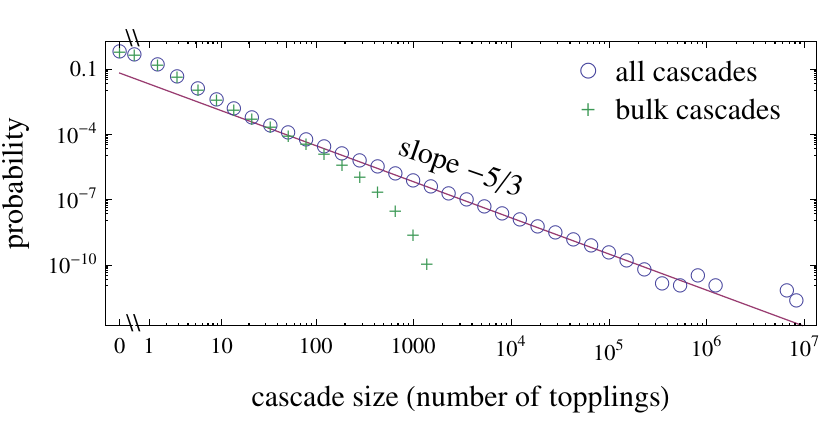}}
  \caption{(Color online)
  Cascade size distribution for a scale-free graph generated by the static model~\cite{Goh2001} with $N=10^7$ nodes, exponent $\gamma = 2.5$, average degree $6.3$, and dissipation rate $\epsilon = 10^{-3}$. We use only the largest component (containing $9503592$ nodes), and we collect statistics on the last $10^8$ cascades out of $1.2 \times 10^8$ cascades. We logarithmically bin the data for all cascades (empty circles) and for the ``bulk'' cascades in which no sand dissipates (``+'' symbols). Past work based on the $1/k$-assumption~\cite{Goh2003} successfully predicts that the cascade size distribution should follow a power law with exponent $-\gamma/(\gamma-1) = -5/3$ (line). \label{fig:powerlaw}}
\end{figure}
For a sufficiently large network with quenched structure ($N \to \infty$) and for a sufficiently small probability of annealed dissipation ($\epsilon \to 0$), the BTW sandpile process on networks self-organizes to a stationary state in which the cascade sizes follow a power law distribution. Figure~\ref{fig:powerlaw} demonstrates this well-known result using Monte Carlo simulations. Historically, only cascades in which no sand dissipates were analyzed~\cite{Goh2003, Lee2004, Goh2005}; the size distribution of these so-called ``bulk cascades'' are indicated by ``+'' symbols in Fig.~\ref{fig:powerlaw}. In this article, we focus on the distribution for \emph{all} cascades, illustrated by circles in Fig.~\ref{fig:powerlaw}. Note that the size distribution of bulk cascades can be approximated using that of all cascades by applying the exponentially decaying weight $\{(1-\epsilon)^{\langle k \rangle s}: s \geq 0\}$, where $s$ is the size of the cascade and $\langle k \rangle$ is the average node degree. This estimation of the chance that no grains dissipate uses the approximation that $\langle k \rangle s$ grains are shed during the cascade. Note that the cascades of size greater than $2\times 10^6$ are an artifact of the combined effect of the finite size of the system and the heavy-tailed degree distribution~\footnote{
%
% *** The footnote ***
In Fig.~\ref{fig:powerlaw}, the presence of a gap of width $5 \times 10^6$ in the cascade size distribution followed by a cluster of larger cascades appears to imply the existence of ``giant cascades''. However, this gap is a combined effect of the finite size of the system (here $N = 10^7$ nodes) and the heavy-tailed degree distribution (here scale-free with exponent $\gamma = 2.5$). For fixed $\epsilon$, the expected fractional size of ``giant cascades'' and the probability for such ``giant cascades'' both appear to converge to zero as $N \to \infty$, so these appear to not be ``real giants''. For a network with a light-tailed degree distribution (e.g., a random $3$-regular graph, in which every node has degree $3$), no such ``giant cascades'' are observed, even for finite $N$.%
}.

The power law behavior of the cascade size distribution indicates that the system is in a critical state: when a node topples, it directly causes on average $R_0 \approx 1$ other nodes to topple. An intuitive explanation reveals why the branching factor $R_0$ should approach unity. The crucial observation is that a cascade of size $s$ destroys on average $\kappa \epsilon s$ grains of sand, where $\kappa \approx \langle k \rangle$ is the expected number of grains shed by a node that topples. On one hand, if $R_0 < 1$, then the distribution of cascade size $s$ falls exponentially, so there exists an $\epsilon$ sufficiently small so that $\kappa \epsilon \langle s \rangle < 1$. Because one grain of sand is added at each cascade, and because each cascade destroys less than one grain on average, the amount of sand slowly builds up in the network, hence increasing $R_0$ toward unity. On the other hand, if $R_0 > 1$, then ``giant'' cascades with $s \propto N$ become possible. For fixed $\epsilon$, there exists an $N$ large enough so that $\kappa \epsilon \langle s \rangle > 1$, so more sand is destroyed than added, and $R_0$ thus decreases toward unity. Hence, the critical value $R_0 = 1$ is an attractor of the dynamics (SOC).

The value of the exponent $\tau$ of the power law cascade size distribution depends on the network structure. In the case of Fig.~\ref{fig:powerlaw}, a scale-free random graph with degree exponent $\gamma = 2.5$ was generated using the static model~\cite{Goh2001}, which results in an exponent $\tau = -5/3$ for the power law cascade size distribution. Using the assumption that at equilibrium a degree-$k$ node receiving a grain of sand has probability $1/k$ to topple---which we hereafter refer to as \emph{the $1/k$-assumption}---past work~\cite{Goh2003} predicts that for $2 < \gamma < 3$ one should obtain a cascade size distribution of exponent $\tau = \gamma/(\gamma-1)$. Empirical observations confirm this prediction. For $\gamma > 3$ and for other networks with light-tailed degree distributions, the ``mean-field'' value $\tau = -3/2$ is observed~\cite{sandpile_ER_annealed}. For instance, for a random $3$-regular graph, the degree distribution is light-tailed, so the slope $\tau = -3/2$; see Fig.~\ref{fig:size3reg}.

For networks with annealed structure, it has been shown that a node of degree $k$ selected uniformly at random is $i$-sand with probability $1/k$ for $0 \le i < k$~\cite{Christensen1993}, which implies that the $1/k$-assumption holds in this case. However, the proof hinges on the assumption of annealed structure, so it does not apply for networks with quenched structure. Past works report that the $1/k$-assumption approximately holds in simulations on networks with quenched structure~\cite{Goh2003, Lee2004, Goh2005, Brummitt2012, Lee2012}. Moreover, the $1/k$-assumption enables the aforementioned successful analytical prediction that $\tau = \gamma/(\gamma-1)$ for $2 < \gamma < 3$~\cite{Goh2003, Lee2004, Goh2005}. Notwithstanding this success, we show next in Sec.~\ref{section:motivation} that the full picture is more intricate.
%
%****************************************************************
\section{Motivation for further analysis \label{section:motivation}}
Past works on the BTW sandpile process on networks focus on the asymptotic behavior at the critical state. Within this paradigm, a great analytical success is the prediction of the power law exponent of the cascade size distribution for random networks with light tailed~\cite{sandpile_ER_annealed} and scale free~\cite{Goh2003, Lee2004, Goh2005} degree distributions. However, there are still fundamental gaps in our understanding of the BTW process on networks.

Section~\ref{subsection:motivation:oneoverkisbad} exposes some of these gaps, which motivate the in-depth analysis presented in the rest of this paper. Section~\ref{subsection:motivation:selfconsistency} justifies why our approach for this in-depth analysis places so much emphasis on self-consistency.
%
%********************************
\subsection{Caveats in the $1/k$-assumption: going beyond universality \label{subsection:motivation:oneoverkisbad}}
Although we confirm that the specific statement ``at equilibrium a degree-$k$ node receiving a grain of sand has probability $1/k$ to topple'' is approximately true in simulations, we identify three subtle points that are not appropriately reflected by this $1/k$-assumption.

First, denoting by $p_k$ the probability that a node selected uniformly at random has degree $k$, the assumption that a degree-$k$ node is $i$-sand with probability $1/k$ for $0 \le i < k$ predicts for the scale-free graph with exponent $\gamma = 2.5$ an average amount of sand per node of
\begin{align}
\sum_{k} \sum_{i=0}^{k-1} i \, \Prob \left(\text{degree $k$, $i$-sand} \right) = \sum_{k} p_k \sum_{i=0}^{k-1} i \frac{1}{k} & \approx 2.66 . \label{eq:averagesandif1overk}
\end{align}
However, the observed value in our simulations is $21\%$ higher (i.e., $3.22$ grains per node). For a random $3$-regular network, the difference is $50\%$ (namely, $1.50$ grains per nodes instead of the $\frac{0}{3} + \frac{1}{3} + \frac{2}{3} = 1$ grain per node predicted by the $1/k$-assumption). In all the networks that we studied, instead of being flatly distributed among the possible values $0 \le i < k$, the probability that a degree-$k$ node selected uniformly at random is $i$-sand is skewed toward larger values of $i$. In particular, we observed that the probability for this node to be at capacity [i.e., $(k-1)$-sand] is typically much larger than $1/k$.

This higher probability for a node to be at capacity is somehow counterbalanced by a second observation: nodes are unlikely to topple more than once during the same cascade, especially nodes that are many hops away from the root of this cascade. The reason is intuitive: a node that recently toppled had its amount of sand reset to zero at the moment of toppling, so it is unlikely to topple again during the same cascade. Hence, when a non-root node $u$ of degree $k$ topples, one of the $k$ grains shed by $u$ is unlikely to cause further topplings because that grain is sent to the node $v$ that caused $u$ to topple in the first place.

A third subtlety is that intricate correlations exist between the amounts of sand on neighboring nodes. For simplicity, we consider the example of a random $3$-regular graph. We define $\psi_i$ to be the probability that a uniformly random node is $i$-sand, and we define $\phi_{ij}$ to be the probability that a uniformly random neighbor of a uniformly random $i$-sand node is $j$-sand. Numerical simulations for a network of $N = 10^7$ nodes using a dissipation of $\epsilon = 10^{-3}$ provide the values
\begin{subequations}%
\begin{gather}%
\begin{pmatrix} \psi_0 & \psi_1 & \psi_2 \end{pmatrix}
\approx
\begin{pmatrix} 0.08397 & 0.33403 & 0.58199 \end{pmatrix} \label{eq:psi_simulation}, \\[1ex]
\begin{pmatrix}
 \phi_{00} & \phi_{01} & \phi_{02} \\
 \phi_{10} & \phi_{11} & \phi_{12} \\
 \phi_{20} & \phi_{21} & \phi_{22}
\end{pmatrix}
\approx
\begin{pmatrix} 0.00050 & 0.25148 & 0.74802 \\ 0.06321 & 0.31349 & 0.62330 \\ 0.10795 & 0.35773 & 0.53432 \end{pmatrix} . \label{eq:phi_simulation}
\end{gather}\label{eq:psiphi_simulation}%
\end{subequations}
If the amount of sand on neighboring nodes were uncorrelated, then we would expect $\phi_{ij}$ to be identically equal to $\psi_j$, independently of the value of $i$ on which we condition the probability. However, $\phi_{ij}$ deviates from $\psi_j$ by amounts ranging from $7\%$ to $33\%$, except for $\phi_{00}$, which deviates from $\psi_0$ by a factor of about $170$. The reason why $\phi_{00}$ is so low is intuitive: sand dissipates with small probability $\epsilon$ (e.g., $\epsilon = 10^{-3}$ in the above simulation), and $00$-sand links can appear only when a grain of sand dissipates as it is sent from a node to a $0$-sand neighbor.

There are also correlations of higher order than the pairwise correlations shown in Eq.~\eqref{eq:phi_simulation}, one of the simplest being ``$3$-star correlations'' (i.e., correlations between the amounts of sand on a node and on its $3$ neighbors). Consider, for instance, $\theta_{\tilde{k}}$, the probability for a $2$-sand node to have $\tilde{k}$ many $2$-sand neighbors out of its $3$ neighbors. If there were no $3$-star correlations, then one would expect the number of $2$-sand neighbors to be binomially distributed, i.e., $\theta_{\tilde{k}} = \binom{3}{\tilde{k}}(\phi_{22})^{\tilde{k}}(1-\phi_{22})^{3-\tilde{k}}$. Using $\phi_{22} \approx 0.53432$ obtained in Eq.~\eqref{eq:phi_simulation}, this hypothesis predicts
\begin{subequations}%
\begin{align}%
\!\!\bigl( \theta_0 \ \theta_1 \ \theta_2 \ \theta_3 \bigr) & \approx \bigl( 0.10099 \ \ 0.34761 \ \ 0.39885 \ \ 0.15255 \bigr) . \label{eq:motivation:oneoverkisbad:theta:binomial}
\end{align}%
However, simulations on a network of $N = 10^7$ nodes using a dissipation of $\epsilon = 10^{-3}$ provide the values
\begin{align}%
\!\!\bigl( \theta_0 \ \theta_1 \ \theta_2 \ \theta_3 \bigr) & \approx \bigl( 0.08359 \ \ 0.36941 \ \ 0.40744 \ \ 0.13956 \bigr) , \label{eq:motivation:oneoverkisbad:theta:measured}
\end{align}\label{eq:motivation:oneoverkisbad:theta}%
\end{subequations}%
which are significantly different from the predicted ones, revealing the presence of nontrivial $3$-star correlations. Nevertheless, ignoring the effects of $3$-star correlations by assuming independence beyond pairwise correlations is more acceptable than assuming no pairwise correlations (i.e., $\phi_{ij} \equiv \psi_j$). 

In view of these subtleties, it may seem surprising that the $1/k$-assumption performs so well at predicting the cascade size distribution's power law exponent $\tau = \gamma/(\gamma-1)$ for scale-free graphs with degree exponent $2 < \gamma < 3$. We conjecture that this success occurs because the branching process based on the $1/k$-assumption~\cite{Goh2003, Lee2004, Goh2005} belongs to the same universality class as the BTW process on a network. See Appendix~\ref{appendix:universality} for mathematical justifications of this conclusion.

Although the critical exponent $\tau$ appears to be unaffected by the aforementioned caveats of the $1/k$-assumption, there are measures affected by these caveats, such as the total amount of sand in the system, the cascade area distribution, and the details of the cascade size distribution (besides its tail behavior). Moreover, the $1/k$-assumption does not allow one to study the system outside of the critical regime ($\epsilon > 0$), nor how the system would behave if one were to control it~\cite{Noel2013short}. Sections~\ref{section:microscopic}--\ref{section:confmodel} take a completely different, ``bottom-up'' perspective that is not fundamentally subject to these limitations like the $1/k$-assumption is. Before this endeavor, Sec.~\ref{subsection:motivation:selfconsistency} covers one last point important for modeling self-organizing (and especially SOC) systems.
%
%********************************
\subsection{The importance of self-consistency in models of self-organized dynamics\label{subsection:motivation:selfconsistency}}

Given a dynamical system, a dynamical model of this system that uses some form of approximation may be understood as a different dynamical system. This apparently trivial statement has important implications in the presence of self-organization, especially for self-organized criticality. Although here we consider the special case where the ``original dynamical system'' is the BTW sandpile process, this general observation has wider relevance~\cite{Noel2013short}.

Consider the system given by the BTW sandpile process on a quenched graph $\mathcal{G}$ sampled from the ensemble of random $3$-regular graphs on $N$ nodes. Denote by $\mathcal{S}$ the rules for updating the system (i.e., dropping grains of sand, toppling nodes, and dissipating sand). Next, consider the system as a Markov chain over the set $\{0,1,2\}^N$ of all possible numbers of grains on each of the $N$ many nodes; we denote by $\widehat{\mathcal{S}}$ the vector of probabilities for the system to be in each possible state after infinitely many cascades have occurred. The triplet $(\mathcal{G},\mathcal{S},\widehat{\mathcal{S}}_n)$ specifies the full system after $n$ cascades have occurred. In the limit of many cascades, the state of the system approaches a stationary state (i.e., stationary distribution), denoted by $\widehat{\mathcal{S}}$.

As mentioned previously in Sec.~\ref{subsection:process:macroscopic}, the stationary state $\widehat{\mathcal{S}}$ approaches a critical point (i.e., the branching factor $R_0$ approaches $1$) in the limit of infinite system size $N \to \infty$ and vanishing dissipation $\epsilon \to 0$. Because of this criticality, the cascade size distribution for the system $(\mathcal{G}, \mathcal{S},\widehat{\mathcal{S}})$ has a power-law tail. Moreover, as mentioned in Sec.~\ref{subsection:motivation:oneoverkisbad}, the stationary state $\widehat{\mathcal{S}}$ contains correlations among the amounts of sand on nearby nodes of the network. 

Consider the Bethe lattice $\mathcal{B}$ with coordination number $3$ (i.e., an infinite $3$-regular graph without boundaries). In the limit $N \to \infty$, assuming that all cascades are finite, the finite system $(\mathcal{G},\mathcal{S},\widehat{\mathcal{S}})$ becomes indistinguishable from the infinite system $(\mathcal{B},\mathcal{S},\widehat{\mathcal{S}})$. In Sec.~\ref{section:threereg}, we define the model $(\mathcal{B},\mathcal{M},\widehat{\mathcal{M}})$ that approximates the infinite system $(\mathcal{B},\mathcal{S},\widehat{\mathcal{S}})$ by neglecting all higher order correlations beyond pairwise correlations. The equilibrium state $\widehat{\mathcal{M}}$ is fully specified by $\psi_i$ and $\phi_{ij}$ because, by definition, the model $\mathcal{M}$ cannot contain higher-order correlations. One could define another state $\widehat{\mathcal{M}}_{\widehat{\mathcal{S}}}$ by setting the model's values of $\psi_i$ and $\phi_{ij}$ to be equal to those measured in the stationary state $\widehat{\mathcal{S}}$. However, it is unlikely that this ``empirical state'' $\widehat{\mathcal{M}}_{\widehat{\mathcal{S}}}$ would be critical under the rules $\mathcal{M}$: among the numerous possible equilibrium states $\widehat{\mathcal{M}}$, rare are the states critical under the rules $\mathcal{M}$. Hence, the model $(\mathcal{B},\mathcal{M},\widehat{\mathcal{M}}_{\widehat{\mathcal{S}}})$ with empirically measured parameters would typically predict a cascade size distribution \emph{without} a power-law tail.

Since the rules $\mathcal{M}$ approximate the rules $\mathcal{S}$, it is very possible that they have a similar SOC mechanism. If this is the case (which we show to be true in Sec.~\ref{section:threereg}), then the equilibrium state $\widehat{\mathcal{M}}$ approaches a critical point as $\epsilon \to 0$ (note that the limit $N \to \infty$ is already accounted for in $\mathcal{B}$). Thus, like in the infinite-size, equilibrated system $(\mathcal{B},\mathcal{S},\widehat{\mathcal{S}})$, the model $(\mathcal{B},\mathcal{M},\widehat{\mathcal{M}})$ predicts a power-law cascade size distribution.

Hence, although the model with empirical parameters $\widehat{\mathcal{M}}_{\widehat{\mathcal{S}}}$ is ``closer'' to the equilibrated system $\widehat{\mathcal{S}}$ than the self-consistent, equilibrated model $\widehat{\mathcal{M}}$ is in terms of the $\psi_i$ and $\phi_{ij}$, the self-consistent model $(\mathcal{B},\mathcal{M},\widehat{\mathcal{M}})$ performs better at reproducing the behavior of the system $(\mathcal{B},\mathcal{S},\widehat{\mathcal{S}})$ than does the model with empirical parameters $(\mathcal{B},\mathcal{M},\widehat{\mathcal{M}}_{\widehat{\mathcal{S}}})$. By using a model that is self-consistent (i.e., in its own equilibrium state), we harness the power of the SOC mechanism encoded in the rules of that model. We expect that a similar statement (that letting the model reach equilibrium is probably more effective than forcing it to have the parameters observed in the real system) holds for network structures other than random $3$-regular graphs. Moreover, the self-organized system of interest need not be critical for the self-consistency of a model to be important, although the presence of a critical point induces a strong dependency in the model's parameters that amplifies the importance of self-consistency. Sections~\ref{subsection:threereg:changesYZ} and~\ref{subsection:threereg:bootstrap} present methods resulting in such a self-consistent model.
%
%****************************************************************
\section{Microscopic understanding of the BTW sandpile process on networks\label{section:microscopic}}

This section provides rigorous results concerning the inner workings of a sandpile cascade. We begin by characterizing in Sec.~\ref{subsection:microscopic:timesthatnodescantopple} the constraints affecting the number of times that nodes can topple in a cascade. Our main results are summarized in Theorem~\ref{theorem:constraintsroot} (for the root of any cascade) and Theorem~\ref{theorem:constraintsnonroot} (for non-root nodes in a tree-like cascade). In Corollary~\ref{corollary:rulespatterns}, we unpack the results of these theorems to make them useful for approximating cascades using branching processes in Secs.~\ref{section:threereg}--\ref{section:confmodel}. Using Theorems~\ref{theorem:constraintsroot}--\ref{theorem:constraintsnonroot}, we prove in Sec.~\ref{subsection:microscopic:areapercolation} sufficient conditions for a cascade to ``form a tree'', a case in which we obtain strong results. The main result, Theorem~\ref{theorem:AAtilde}, specifies sufficient conditions for using the full toolbox developed in Sec.~\ref{subsection:microscopic:timesthatnodescantopple}. An example use of Theorem~\ref{theorem:AAtilde} is Corollary~\ref{corollary:areabond}, which establishes a correspondence between cascade area and bond percolation on the subgraph induced by nodes at capacity. Other examples of using Theorem~\ref{theorem:AAtilde} are given in Secs.~\ref{section:threereg}--\ref{section:confmodel}. Proofs are deferred to the appendices.
%
%********************************
\subsection{Constraints applicable to tree-like cascades \label{subsection:microscopic:timesthatnodescantopple}}

In tree-like sandpile cascades on networks that are not necessarily tree-like, causality constrains the shape of cascades, the number of times that each node topples, and how many grains neighboring nodes exchange. The purpose of this section is to identify such constraints because they are essential to our later calculations of cascade area and size. Understanding Corollary~\ref{corollary:rulespatterns} and the associated Fig.~\ref{figure:patterns} suffices to understand the sections that follow, whereas Theorems~\ref{theorem:constraintsroot}--\ref{theorem:constraintsnonroot} are the main mathematical results that concisely summarize our characterization of the sandpile model on networks. Proofs are in Appendix~\ref{appendix:proofs:timesthatnodescantopple}.

Our strong statements (Theorems~\ref{theorem:constraintsroot}--\ref{theorem:constraintsnonroot}) require a notion of time and causality. To make time considerations well defined and independent of the numerical implementation of the sandpile process, we assume that each grain sent from one node to another takes a positive and possibly random amount of time to reach its target or to dissipate, and we assume that nodes topple as soon as they exceed their capacity. 
 
We are now ready to formulate strong constraints on the root (and on its surroundings) in any cascade on any network $\mathcal{G}$. Note that the phrase ``a node $v$ receives $n$ grains from a single neighbor'' has the intended meaning ``for at least one of $v$'s neighbors, $v$ received $n$ grains from that neighbor''. 

\begin{theorem}[Strong constraints on the root in any cascade]
In any cascade, the root topples $0$ times if and only if the root was not initially at capacity. Moreover, for any positive integer $n$, the root topples $n$ times by time $t \geq 0$ if and only if (a) the root was initially at capacity, and (b) by time $t$ the root received from each of its neighbors either $n$ or $n-1$ grains (except not $n$ grains from each of its neighbors). \label{theorem:constraintsroot}
\end{theorem}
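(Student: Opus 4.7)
The plan is to establish the biconditional in two pieces. The easy piece is the $n = 0$ case, which follows from the fact that no node other than the root receives any sand before the root itself topples: if the root's post-drop count $s_u + 1$ fails to exceed its capacity $k-1$, then nothing at all happens in the cascade; conversely, if the root is at capacity, then the drop necessarily pushes it over. The substantial piece is the equivalence for $n \geq 1$, which I would prove by combining an elementary bookkeeping identity at the root with a causal ``synchronization'' argument between the root's topples and those of its neighbors.

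For the forward direction with $n \geq 1$, condition (a) is immediate: the first topple of the root can only be triggered by the drop, since every grain arriving from a neighbor requires a preceding neighbor topple, which in turn requires a grain to have left the root earlier. So $s_u + 1 \geq k$, forcing $s_u = k - 1$. Condition (b) I would split into two parts. The aggregate part follows from the count identity at the root, $c_u(t) = s_u + 1 + R_u(t) - n k$ with $R_u(t) := \sum_{v \sim u} T_v(t)$: combined with $c_u(t) \in \{0, \ldots, k - 1\}$ and $s_u = k - 1$, this yields $R_u(t) \in [(n-1)k, \, n k - 1]$, which is exactly the total count of grains received under condition (b). The per-neighbor refinement---that each individual $T_v(t) \in \{n-1, n\}$---I would establish by a causal induction on the ordered event sequence of the cascade, maintaining the invariant that between consecutive topples of $u$ each neighbor topples at most once and that each neighbor ``catches up'' within at most one cycle. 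For the backward direction, assume (a) and (b): the same count identity together with the computed range on $R_u(t)$ forces $T_u(t) = n$ as the unique integer satisfying the identity with $c_u(t) \in [0, k-1]$.

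The main obstacle is upgrading the aggregate bound $R_u(t) \in [(n-1)k, n k - 1]$ to the per-neighbor bound $T_v(t) \in \{n-1, n\}$, since the count identity alone admits lopsided distributions (one neighbor toppling many times, another not at all) that still satisfy the aggregate constraint. Closing this gap genuinely requires the causal structure of the cascade: I would propagate an upper bound $T_x(t) \leq T_{\mathrm{parent}(x)}(t)$ outward from the root along the cascade's causal tree and combine it with the root's identity to squeeze out the matching lower bound $T_v(t) \geq n - 1$. This causal propagation is the most delicate step and is where I expect the appendix to do its real work.
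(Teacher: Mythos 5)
Your overall architecture---grain counting at the root plus a causality bound on the neighbors, with the per-neighbor lower bound squeezed out of the aggregate identity---is the same as the paper's, and your treatment of the $n=0$ case, of condition (a), and of the backward direction is correct. The genuine gap is in the mechanism you propose for the causal step. Theorem~\ref{theorem:constraintsroot} is stated for \emph{any} cascade on \emph{any} network; the parent/child nomenclature and the ``causal tree'' are only defined in the paper for cascades that form a finite tree, and the bound $T_x(t)\le T_{\mathrm{parent}(x)}(t)$ that you want to propagate outward is false for general cascades: when cycles are present, a node can receive sand from several neighbors and topple more times than the neighbor that first fed it. (That parent--child comparison is precisely the content of the later Theorem~\ref{theorem:constraintsnonroot} and Lemma~\ref{lemma:Gprimetree}, which are restricted to tree-forming cascades.) Likewise, your proposed invariant ``between consecutive topples of $u$ each neighbor topples at most once'' is essentially equivalent to the conclusion you are trying to reach, and you give no independent way to establish it.

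What the paper does instead is prove a global, tree-free causality statement (Lemma~\ref{lemma:anycascade}): (i) a non-root node cannot topple an $n$th time before receiving $n$ grains from a \emph{single} neighbor (pure grain counting at that node), and (ii) no node receives $n$ grains from a single neighbor before the root topples an $n$th time (a minimal-time, first-violation argument: the first such delivery would have to come from a node that itself had already received $n$ grains from a single neighbor, unless that node is the root). Together these give the per-neighbor upper bound of $n-1$ grains delivered to the root at the instant $t'$ of its $n$th topple, and the count identity evaluated at $t'$ (not at $t$) then forces each neighbor to have delivered exactly $n-1$ grains by $t'$, hence at least $n-1$ by any later $t$. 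If you replace your tree propagation by this global lemma, your proof goes through. One further small point: because of annealed dissipation, the number of grains the root \emph{receives} from a neighbor is not the neighbor's topple count $T_v(t)$ but is only bounded above by it; the theorem and the count identity are about grains received, so the two quantities should not be identified.
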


Whereas Theorem~\ref{theorem:constraintsroot} provides strong constraints for the root in any cascade, Theorem~\ref{theorem:constraintsnonroot} provides strong constraints for non-root nodes in cascades that form a finite tree (which is the case, or nearly the case, for most cascades on tree-like networks, including configuration model networks). To precisely define this notion of a cascade forming a tree, for a cascade on the graph $\mathcal{G}$, we define the graph $\mathcal{G}^\dagger$ to have all the nodes of $\mathcal{G}$ that have sand sent toward them in the cascade and all the edges of $\mathcal{G}$ along which sand is sent in the cascade. That is, $\mathcal{G}^\dagger$ is the subgraph induced by the root, the nodes that topple, and the neighbors of the nodes that topple, from which we remove the links between pairs of nodes that both do not topple. We say that \emph{a cascade forms a finite tree $\mathcal{G}^\dagger$} if $\mathcal{G}^\dagger$ is a finite tree.

We introduce the following nomenclature for a cascade that forms a tree. When a node sends a grain to each of its neighbors, those neighbors are the node's \emph{children} in the cascade, and the toppled node is the \emph{parent} of these children. Only the first time at which a node receives sand during a cascade matters for this nomenclature. Hence, if a child later sends a grain to its parent, it does not acquire a new title (the child remains a child, and its parent remains its parent). We say that the root belongs to generation $0$, and the children of a node in generation $g$ belong to generation $g+1$. The chain of parents emanating from a node are its \emph{ancestors}, and the tree of children starting from one node are its \emph{descendants}. Hence, the root is the ancestor of every other node receiving sand during a cascade, and these nodes are all descendants of the root.

\begin{theorem}[Strong constraints on non-root nodes in cascades forming a finite tree]
For a cascade that forms a finite tree $\mathcal{G}^\dagger$, a non-root node $v$ topples $0$ times by time $t>0$ if and only if $v$ was not initially at capacity or $v$ receives $0$ grains from its parent by time $t$. Moreover, under the same conditions and for any positive integer $n$, a non-root node $v$ topples $n$ times by time $t>0$ if and only if all of the following conditions hold: (a) $v$ was initially at capacity; (b) $v$ received $n$ or $n+1$ grains from its parent by time $t$; and (c) $v$ received $n$ or $n-1$ grains from each of its children by time $t$ (except not $n$ grains from every child if $v$ received $n+1$ grains from its parent). \label{theorem:constraintsnonroot}
\end{theorem}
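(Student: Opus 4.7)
The plan is to prove Theorem~\ref{theorem:constraintsnonroot} in three stages: establish claim (a), derive (b) and (c), and verify the converse.

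For stage (i), the central observation is that the first grain received by a non-root node $v$ in a tree-like cascade must come from $v$'s parent $p$, since any descendant of $v$ in $\mathcal{G}^\dagger$ can transmit sand to $v$ only after itself toppling, and such a topple in turn requires $v$ to have toppled first. I would induct on the depth of $v$ from the root. If $v$ were not initially at capacity ($s_0^v \leq k_v - 2$), at least $k_v - s_0^v \geq 2$ grains from $p$ would be required before $v$'s first topple, forcing $p$ to topple at least twice before $v$ does. Propagating this ``extra demand'' up the ancestral chain conflicts with the fact that the root absorbs only a single external grain (the seed) and cannot topple repeatedly without first receiving grains back from its descendants. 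The induction closes by tracking, for each ancestor of $v$, how many topples it can support using only contributions from subtrees not containing $v$.

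For stage (ii), the workhorse is a pair of edge-monotonicity lemmas proven by induction on the structure of $\mathcal{G}^\dagger$: for every tree edge $(p,v)$ and every time $t$, one has $T_p(t) \geq T_v(t)$ and, whenever (a) holds at $v$, $T_p(t) \leq T_v(t) + 1$. The lower bound is obtained by a bottom-up induction: just before $v$'s $j$-th topple the sand balance forces $v$'s total received sand to be at least $(j-1)k_v + 1$, and combining this with the inductive hypothesis $T_c \leq T_v$ on every child edge caps the children's contribution by $(k_v-1)(j-1)$, forcing $R_p \geq j$. The upper bound is obtained by the dual argument combined with a subtree-flux accounting: the net influx $R_p - T_v$ across $(p,v)$ equals the net accumulation of sand in the subtree rooted at $v$, and since every internal toppled descendant begins at capacity and can only lose sand, while untoppled leaves have bounded capacity-deficit, this net accumulation is tightly constrained. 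Feeding both inequalities into the sand balance $s_v(t) = (k_v-1) + R_p + \sum_{j} R_{c_j} - n k_v \in [0, k_v-1]$ then yields (b) and (c). The ``except'' clause is immediate: $R_p = n+1$ together with every $R_{c_j} = n$ would force $s_v(t) = k_v$, triggering an $(n+1)$-th topple under the convention that a node topples the instant it exceeds capacity.

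For stage (iii), the converse proceeds by induction on $n$: given (a)--(c), the sand balance places $v$'s state at time $t$ in the stable range and consistent with exactly $n$ past topples, while the causal ordering established in stage (ii) guarantees that precisely $n$ topples occur. I expect the principal obstacle to lie in the upper-bound half of stage (ii); naive sand-balance arguments alone are insufficient, and a rigorous proof must combine causality with a careful accounting of how sand can temporarily accumulate in the subtree rooted at $v$, including contributions from leaves not initially at capacity. Stage (i), though conceptually clean, also demands delicate bookkeeping in the inductive step at each ancestor, where one must verify that the recursion forced by $v$'s deficit of initial sand cannot be satisfied by the finite seed-supply at the root.
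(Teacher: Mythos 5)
Your overall architecture is essentially the paper's: the two edge-monotonicity inequalities you isolate, $T_v(t)\le T_p(t)\le T_v(t)+1$, are precisely the content of Lemma~\ref{lemma:Gprimetree}\ref{lemma:Gprimetree:receivenfromparentbeforetopplen} and Lemma~\ref{lemma:Gprimetree}\ref{lemma:Gprimetree:nonrootmaxfromparent}; the forward implication is the same grain count placing the total sand on and received by $v$ in $[k_v n,\,k_v(n+1)-1]$; and your derivation of the ``except'' clause is correct. Note also that your ``central observation'' that $v$'s first grain must come from its parent (equivalently, that no descendant of $v$ topples before $v$ does) is not free: it needs the first-time/minimal-counterexample argument of Lemma~\ref{lemma:Gprimetree}\ref{lemma:Gprimetree:ancestorfirst}, which you should prove rather than assume.

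The genuine gap is in your proposed proof of the upper bound $T_p\le T_v+1$. The subtree-flux identity $R_p-T_v=\sum_{u\in\mathrm{subtree}(v)}\Delta s_u$ (plus dissipation) does not tightly constrain $R_p$: untoppled descendants of toppling nodes \emph{gain} sand (an untoppled child of $v$ absorbs all $T_v$ grains that $v$ sends it), so the net accumulation in the subtree is not bounded by $1$. Likewise the local balance at $v$, namely $s_v(t)=(k_v-1)+R_p+\sum_j R_{c_j}-T_v k_v\le k_v-1$ combined with $R_{c_j}\ge T_v-1$, yields only $R_p\le T_v+k_v-1$. No accounting confined to $v$'s side of the edge can rule out the parent toppling $T_v+2$ times; the constraint that actually forces the bound lives on the \emph{parent's} side. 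For the parent to topple an $m$th time it must---by grain counting at the parent, given that the parent's own parent has sent it at most $m$ grains---already have received $m-1$ grains from \emph{every} one of its children, including $v$; hence $T_v\ge T_p-1$. This forces the induction to run top-down by generation, anchored at the root via Theorem~\ref{theorem:constraintsroot}, not bottom-up over subtrees as you propose. Your stage (i) inherits the same defect, since propagating the ``extra demand'' of a not-at-capacity node up the ancestral chain is exactly an application of this upper bound along the path to the root; once the top-down lemma is established, stage (i) collapses to two lines (a not-at-capacity $v$ needs two grains before its first topple, its children can supply none before $v$ topples, and its parent supplies at most one while $T_v=0$).
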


Together, Theorems~\ref{theorem:constraintsroot}--\ref{theorem:constraintsnonroot} provide strong constraints for any node in a cascade forming a finite tree, and this result holds at any time during such a cascade. Because the ultimate outcome of a cascade is of particular interest to our application, we specialize the mathematical apparatus of Theorems~\ref{theorem:constraintsroot}--\ref{theorem:constraintsnonroot} to the form in Corollary~\ref{corollary:rulespatterns}, which turns out to be useful when elaborating a branching process that predicts the outcome of a cascade. Before doing so, some definitions are required.

For a cascade that forms a finite tree $\mathcal{G}^\dagger$, we associate a \emph{pattern} to each node $v$ in $\mathcal{G}^\dagger$. A pattern inherits all properties of its associated node, such as whether the node is the root, whether it is at capacity, and whether it is a parent or child with respect to another node. 

Each non-root pattern has a \emph{signature} given by a pair of integers $(n, n')$, which characterizes grains exchanged between this pattern and its parent. Specifically, given a node $v$ with parent $u$, node $v$ is associated with a pattern of signature $(n,n')$ if and only if the parent $u$ sends $n$ grains toward $v$ and the parent $u$ receives $n'$ grains from $v$. Note that, due to dissipation, the child $v$ may receive fewer than $n$ grains from $u$, and $v$ may send more than $n'$ grains toward $u$. The intuition to keep in mind is that we count grains \emph{from the parent's perspective} (i.e., $n$ and $n'$ are the numbers of grains sent from and received by the parent $u$ with respect to this particular child $v$). Though each pattern has a single, well-defined signature, two different patterns may share the same signature. For simplicity, we say that a non-root node $v$ has signature $(n,n')$ if $v$ is associated to a pattern that has signature $(n,n')$.

We are now ready to provide the corollary of Theorems~\ref{theorem:constraintsroot}--\ref{theorem:constraintsnonroot} enumerating the rules (illustrated in Fig.~\ref{figure:patterns}) that enable our approximations of cascade size and area in Secs.~\ref{section:threereg}--\ref{section:confmodel}. 
\begin{figure}
\begin{center}
\includegraphics[width=\linewidth]{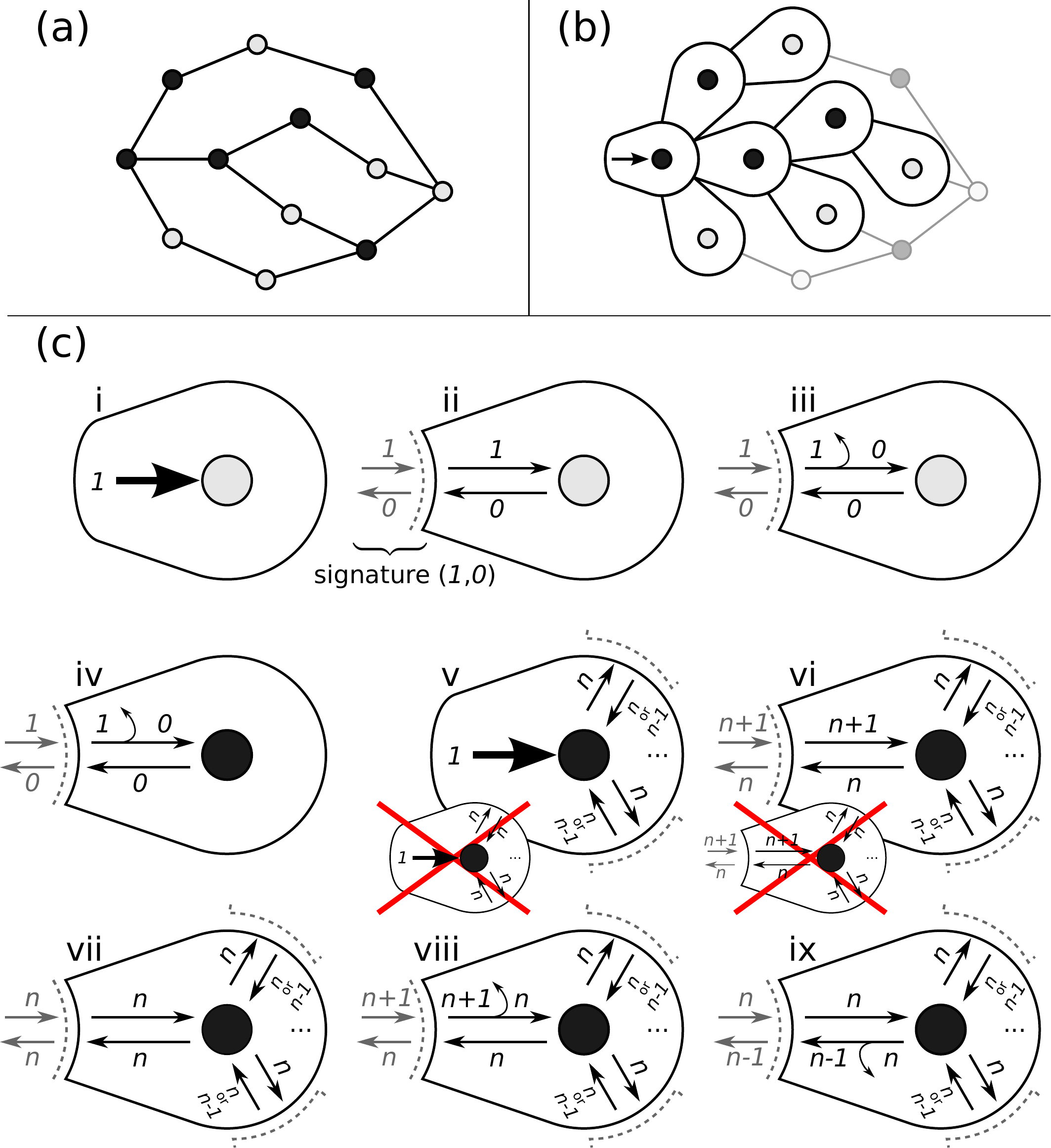} \\
\end{center}
\caption{(Color online) \label{figure:patterns}
Finite, tree-like cascades as an assemblage of patterns. (a) The graph before the cascade. Dark circles represent nodes initially at capacity, while light circles represent nodes initially not at capacity. (b) Starting at the root (indicated by an arrow), the cascade is assembled from patterns that encode how the grains of sand are shed during the process. Note that this assemblage may form a tree even though the original graph contained cycles. (c) The only possible patterns and rules for assembling them. The roman numerals correspond to the cases in Corollary~\ref{corollary:rulespatterns}. The convex shapes i and v represent root patterns, while the concave shapes represent non-root patterns. The large arrow represents the first grain dropped on the root; a curved arrow indicates dissipation; and the other arrows indicate exchanges of sand between neighboring nodes. The signature of a non-root pattern may be inferred by the arrows on its left. A large ``X'' marks a forbidden special case for patterns v and vi.}
\end{figure}
\begin{figure}
\begin{center}
\includegraphics[width=0.9\linewidth]{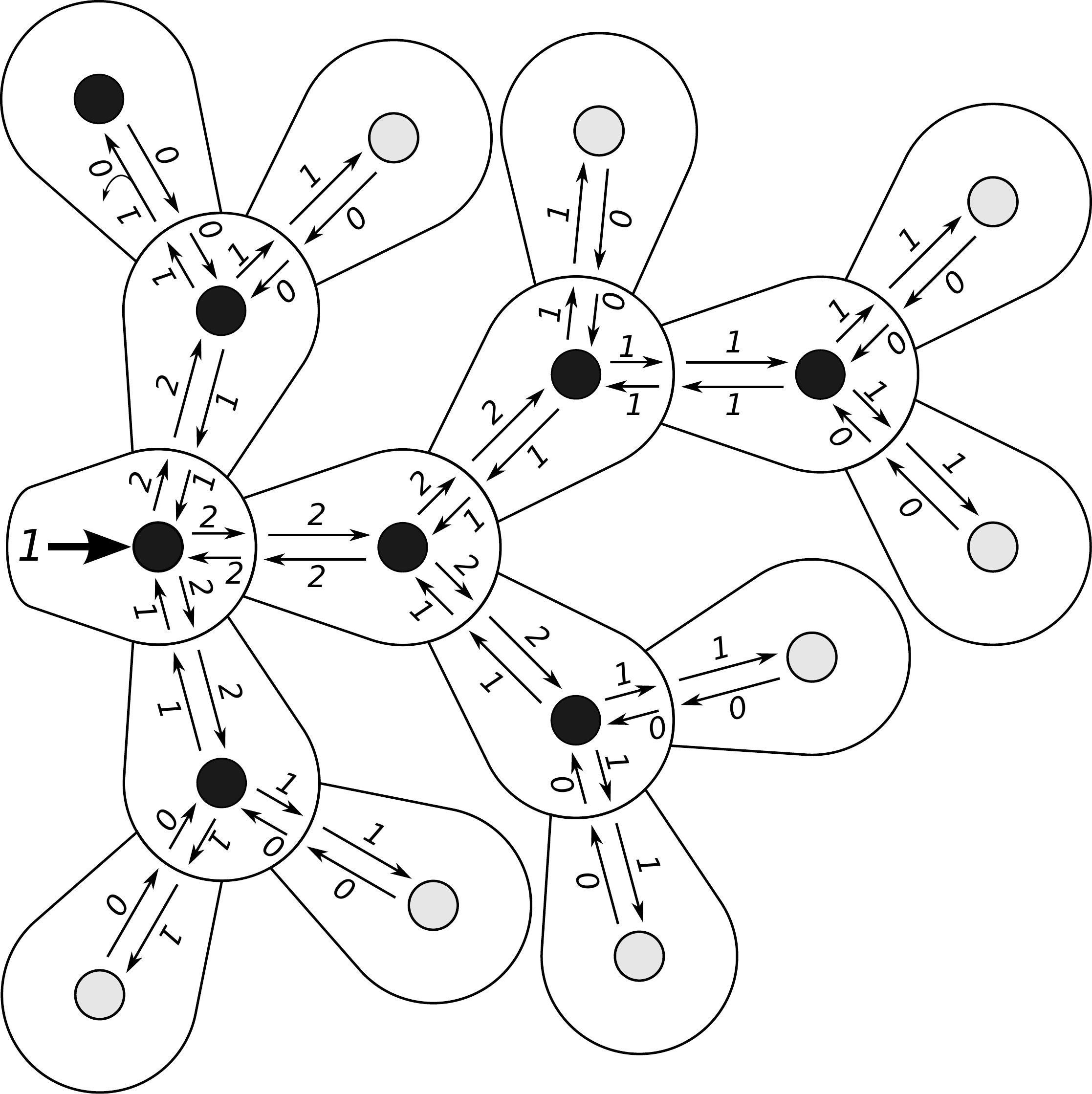}
\end{center}
\caption{
\label{figure:patterns_example}
Example of a cascade forming a finite tree $\mathcal{G}^\dagger$ from the ``pattern'' perspective. Refer to the caption of Fig.~\ref{figure:patterns} for notation. There are $16$ nodes in $\mathcal{G}^\dagger$, so $16$ patterns are required. All the $8$ nodes initially not at capacity do not topple [all pattern~\ref{pattern:nonrootnotatcapacityreceives}]. A single node at capacity does not topple (top left) because the grain sent towards it dissipates [pattern~\ref{pattern:nonrootatcapacitydissipates}]. The $7$ remaining nodes are all initially at capacity: $2$ nodes each topple twice [the root, center left, has pattern~\ref{pattern:rootatcapacity} and the node immediately to the right of the root has pattern~\ref{pattern:nonrootatcapacityreceivesnormal}] and the $5$ others topple once [$4$ have pattern~\ref{pattern:nonrootatcapacityreceivesspecial} and $1$ has pattern~\ref{pattern:nonrootatcapacityreceivesnormal}]. The resulting cascade area is $7$ and the cascade size is $9$.}
\end{figure}
\begin{corollary}[Constraints for patterns] \label{corollary:rulespatterns}
The following statements hold for a cascade that forms a finite tree $\mathcal{G}^\dagger$.

A node $v$ in $\mathcal{G}^\dagger$ topples $0$ times if and only if exactly one of the following holds:
\begin{enumerate}
\item $v$ is the root and is not initially at capacity. \label{pattern:rootnotatcapacity}
\item $v$ is non-root; $v$ is not initially at capacity; $v$ has signature $(1,0)$; and the grain sent by $v$'s parent toward $v$ reaches $v$. \label{pattern:nonrootnotatcapacityreceives}
\item $v$ is non-root; $v$ is not initially at capacity; $v$ has signature $(1,0)$; and the grain sent by $v$'s parent toward $v$ dissipates. \label{pattern:nonrootnotatcapacitydissipates}
\item $v$ is non-root; $v$ is initially at capacity; $v$ has signature $(1,0)$; and the grain sent by $v$'s parent toward $v$ dissipates. \label{pattern:nonrootatcapacitydissipates}
\end{enumerate}

For a positive integer $n$, a node $v$ in $\mathcal{G}^\dagger$ topples $n$ times if and only if exactly one of the following holds:
\begin{enumerate}[resume]
\item $v$ is the root, and each of $v$'s children has signature $(n,n)$ or $(n,n-1)$, except not all of its children may have signature $(n,n)$. \label{pattern:rootatcapacity}
\item $v$ is non-root; $v$ has signature $(n+1,n)$; and each of $v$'s children has signature $(n,n)$ or $(n,n-1)$, except not all of its children may have signature $(n,n)$. \label{pattern:nonrootatcapacityreceivesspecial}
\item $v$ is non-root; $v$ has signature $(n,n)$; and each of $v$'s children has signature $(n,n)$ or $(n,n-1)$.  \label{pattern:nonrootatcapacityreceivesnormal}
\item $v$ is non-root; $v$ has signature $(n+1,n)$; each of $v$'s children has signature $(n,n)$ or $(n,n-1)$; and the last grain sent by $v$'s parent toward $v$ dissipates. \label{pattern:nonrootatcapacitydissipatesin}
\item $v$ is non-root; $v$ has signature $(n,n-1)$; each of $v$'s children has signature $(n,n)$ or $(n,n-1)$; and the last grain sent by $v$ toward its parent dissipates. \label{pattern:nonrootatcapacitydissipatesout}
\end{enumerate}
\end{corollary}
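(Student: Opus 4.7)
The plan is to prove Corollary \ref{corollary:rulespatterns} by a careful, case-by-case translation of Theorems \ref{theorem:constraintsroot} and \ref{theorem:constraintsnonroot} into the language of patterns and signatures. The essential translation dictionary for a non-root node $v \in \mathcal{G}^\dagger$ with parent $u$ is: the first coordinate $a$ of $v$'s signature $(a, d)$ equals the number of times $u$ topples (since $u$ sheds one grain toward every neighbor per toppling), while the second coordinate $d$ equals the number of grains $u$ actually receives from $v$ (which may be smaller than the number of topplings of $v$, because of dissipation on the $v \to u$ link). Correspondingly, the number of grains $v$ receives from $u$ equals $a$ minus the number of dissipations on $u \to v$.

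For cases \ref{pattern:rootnotatcapacity}--\ref{pattern:nonrootatcapacitydissipates} (the node topples 0 times), case \ref{pattern:rootnotatcapacity} follows directly from Theorem \ref{theorem:constraintsroot}. For the non-root cases, I first observe that if $v$ topples 0 times then $v$ sends no grains to $u$, so $d = 0$. Since $v \in \mathcal{G}^\dagger$, the parent $u$ has toppled at least once, so $a \geq 1$; applying Theorem \ref{theorem:constraintsnonroot} to $u$ (each of whose children, including $v$, must have provided $a$ or $a - 1$ grains) forces $a = 1$. Hence the signature is $(1, 0)$. The three admissible combinations of (initial capacity of $v$, whether the single grain reaches $v$) correspond to cases \ref{pattern:nonrootnotatcapacityreceives}, \ref{pattern:nonrootnotatcapacitydissipates}, and \ref{pattern:nonrootatcapacitydissipates}; the missing combination (at capacity and grain arrives) is excluded because Theorem \ref{theorem:constraintsnonroot} would then force $v$ to topple at least once.

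For cases \ref{pattern:rootatcapacity}--\ref{pattern:nonrootatcapacitydissipatesout} (node topples $n \geq 1$ times), case \ref{pattern:rootatcapacity} is a straightforward rewriting of Theorem \ref{theorem:constraintsroot}: the clause ``each child has signature $(n, n)$ or $(n, n-1)$, not all $(n, n)$'' encodes simultaneously that the root sent $n$ grains to each neighbor (hence toppled $n$ times, which by Theorem \ref{theorem:constraintsroot} forces it to have been initially at capacity) and that the root received $n$ or $n - 1$ grains from each neighbor, with at least one strict inequality. For the non-root cases \ref{pattern:nonrootatcapacityreceivesspecial}--\ref{pattern:nonrootatcapacitydissipatesout}, applying Theorem \ref{theorem:constraintsnonroot} at $v$ yields the child-signature constraints, while applying it at $u$ restricts $a$ to $\{n, n+1\}$: indeed, $v$ sends exactly $n$ grains toward $u$, so $u$'s own children-condition forces $a - 1 \leq n \leq a$. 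Enumerating the admissible combinations of $a$ and of dissipation on each link (using that $v$ must receive $n$ or $n+1$ grains from $u$, which rules out two dissipations on $u \to v$ and any dissipation on $v \to u$ when $a = n+1$) yields precisely cases \ref{pattern:nonrootatcapacityreceivesspecial}--\ref{pattern:nonrootatcapacitydissipatesout}, with the ``not all $(n, n)$'' caveat in case \ref{pattern:nonrootatcapacityreceivesspecial} coming directly from the corresponding caveat in Theorem \ref{theorem:constraintsnonroot}(c).

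The hard part will be interpreting the clause ``the last grain $\ldots$ dissipates'' in cases \ref{pattern:nonrootatcapacitydissipatesin} and \ref{pattern:nonrootatcapacitydissipatesout}. Theorem \ref{theorem:constraintsnonroot} is a statement about aggregate grain counts only, so every dissipation ordering consistent with the signature $(n+1, n)$ on the $u \to v$ link (respectively $(n, n-1)$ on the $v \to u$ link) produces the same final toppling count at $v$. I would handle this by treating the clause as a canonical labeling convention: among equivalent dissipation orderings, the corollary selects the one in which the temporally last grain on the affected link is the one that dissipates. This choice is well-defined because, when $v$ topples $n$ times, only the first $n$ grains $u$ sends are causally required for $v$'s topplings, so designating the $(n+1)$-th as the dissipated one is always consistent with the timing; an analogous argument applies on $v \to u$. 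Under this convention, cases \ref{pattern:rootnotatcapacity}--\ref{pattern:nonrootatcapacitydissipatesout} are exhaustive and mutually exclusive, which is the form needed for the branching-process constructions of Secs.~\ref{section:threereg}--\ref{section:confmodel}.
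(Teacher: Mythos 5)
Your main enumeration takes essentially the same route as the paper's proof: you apply Theorem~\ref{theorem:constraintsroot} or~\ref{theorem:constraintsnonroot} at $v$ and at its parent to trap the grain exchange in the chain $n' \ge m \ge n \ge m' \ge n'-1$ (where $(n',m')$ is $v$'s signature and $m$ is what $v$ actually receives), so that the four admissible positions of the single strict inequality give cases \ref{pattern:nonrootatcapacityreceivesspecial}--\ref{pattern:nonrootatcapacitydissipatesout}; your zero-toppling and root cases also coincide with the paper's argument.

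The genuine gap is in your handling of the ``last grain dissipates'' clauses of cases \ref{pattern:nonrootatcapacitydissipatesin} and \ref{pattern:nonrootatcapacitydissipatesout}. You claim that Theorem~\ref{theorem:constraintsnonroot} constrains only aggregate counts, that all dissipation orderings compatible with the signature yield the same outcome at $v$, and that the clause is therefore just a canonical labeling convention. That is not so: the orderings are not equivalent, and the clause is a provable consequence of causality rather than a convention. For $j$ less than the number of parental topplings, the parent $u$ cannot topple a $(j+1)$th time before receiving $j$ grains from each of its children, $v$ included (Theorem~\ref{theorem:constraintsroot} if $u$ is the root, Lemma~\ref{lemma:Gprimetree}\ref{lemma:Gprimetree:nonrootchildrennminusone} otherwise); $v$ cannot topple a $j$th time before receiving its $j$th grain from $u$ (Lemma~\ref{lemma:Gprimetree}\ref{lemma:Gprimetree:receivenfromparentbeforetopplen}); and $v$ sends nothing toward $u$ before toppling. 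Chaining these facts shows that every grain except possibly the final one on the affected link must arrive---if an earlier grain dissipated, the back-and-forth exchange would stall, $u$ would never topple again, and the signature would not be $(n+1,n)$ [resp.\ $(n,n-1)$] in the first place. Establishing this is needed for the ``only if'' direction of the corollary as stated, and it is precisely what licenses attaching a single factor $\epsilon$ (rather than a combinatorial factor proportional to the number of grains on the link) to these patterns in Eqs.~\eqref{eq:threereg:areasize:An} and~\eqref{eq:confmodel:areasize:Cn}. Your own observation that ``only the first $n$ grains are causally required'' shows the last-grain configuration is realizable, but not that it is the only realizable one---which is the direction the corollary actually requires.
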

Figure~\ref{figure:patterns_example} gives an explicit example of a cascade represented by the patterns~\ref{pattern:rootnotatcapacity}--\ref{pattern:nonrootatcapacitydissipatesout} enumerated in Corollary~\ref{corollary:rulespatterns}. 
%
%********************************
\subsection{Sufficient condition for cascades to form a tree on locally tree-like networks: a tool for calculating cascade area and size \label{subsection:microscopic:areapercolation}}

Equipped with the results of Sec.~\ref{subsection:microscopic:timesthatnodescantopple} regarding the possible patterns that may appear in a cascade, we now establish a sufficient condition for the cascade to form a tree, a case in which we can access the full power of Theorems~\ref{theorem:constraintsroot}--\ref{theorem:constraintsnonroot}. Readers interested only in the application of these rigorous results may skip this section with the message that, for ``locally tree-like'' networks, a finite cascade forms a finite tree, so only the nodes that are initially at capacity can topple during that cascade (Theorem~\ref{theorem:AAtilde}). A consequence of this result is that, under some reasonable conditions, obtaining the probability distribution for cascade area amounts to a bond percolation problem with bond occupation probability $1-\epsilon$ on the subgraph of the nodes that were at capacity before the cascade (Corollary~\ref{corollary:areabond}). Proofs are provided in Appendix~\ref{appendix:proofs:areapercolation}.

Now we make these messages precise. First, we make a definition for a node's surroundings to be ``locally tree-like''. Our definition is strict: we require all sufficiently small neighborhoods of a node to be a tree. Specifically, fix a positive integer $M$ and a node $v$ in a finite graph $\mathcal{G}$. We say that the triplet $(\mathcal{G}, M, v)$ is \emph{good} if, for every subgraph $U$ of $\mathcal{G}$ that contains node $v$ and at most $M-1$ other nodes, $v$ belongs to a component of $U$ that is a tree (i.e., that contains no cycles).

Denote by $\widetilde{\mathcal{G}}$ the subgraph of $\mathcal{G}$ induced by the nodes at capacity. Consider a cascade occurring on $\widetilde{\mathcal{G}}$ as a cascade occurring on $\mathcal{G}$ (starting with the same root) but with the nodes not at capacity treated as ``sinks'' that dissipate all sand sent to them. Theorem~\ref{theorem:AAtilde} specifies conditions guaranteeing that these nodes not at capacity would not topple even if they were not treated as sinks.

\begin{theorem}[Sufficient condition for the cascade to form a tree and for ignoring nodes not at capacity] \label{theorem:AAtilde}
Let $\mathcal{G}'$ be the subgraph induced by the nodes that topple in a cascade begun from the root $v$ on a graph $\mathcal{G}$, and let $\mathcal{A}$ be the area of this cascade. Let $\widetilde{\mathcal{G}}'$ be the subgraph induced by the nodes that topple in a cascade begun from the same root $v$ on the graph $\widetilde{\mathcal{G}}$ induced by the nodes at capacity in $\mathcal{G}$, and let $\widetilde{\mathcal{A}}$ be the area of this cascade. (If $v$ is not in $\widetilde{\mathcal{G}}$, define $\widetilde{\mathcal{G}}'$ as empty and $\widetilde{\mathcal{A}} = 0$.) 

If $M$ is an integer such that $(\mathcal{G},M,v)$ is good, and if $\widetilde{\mathcal{A}} \in \{0, 1, \dots,  M-1\}$, then the following hold: the cascade forms a finite tree $\mathcal{G}^\dagger$, $\widetilde{\mathcal{G}}'$ is a tree; $\mathcal{G}' = \widetilde{\mathcal{G}}'$; and $\mathcal{A} = \widetilde{\mathcal{A}}$.
\end{theorem}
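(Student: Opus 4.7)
The strategy is to handle the degenerate case separately, then use the goodness hypothesis to pin down a rigid tree structure on the toppling set of $\widetilde{\mathcal{G}}$, and finally exploit the pattern dictionary of Corollary~\ref{corollary:rulespatterns} to preclude any extra toppling on $\mathcal{G}$. In the trivial case $v \notin \widetilde{\mathcal{G}}$, the root is below capacity, the single dropped grain cannot trigger a toppling on $\mathcal{G}$, and $\mathcal{G}' = \widetilde{\mathcal{G}}' = \emptyset$ with $\mathcal{A} = \widetilde{\mathcal{A}} = 0$, so all conclusions follow immediately. Assume henceforth $v \in \widetilde{\mathcal{G}}$ and let $T$ denote the toppling set of the cascade on $\widetilde{\mathcal{G}}$, with $|T| = \widetilde{\mathcal{A}} \leq M - 1$ and $v \in T$.

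First I would show that $\widetilde{\mathcal{G}}'$ is a tree. Every non-root toppling needs a grain from a previously-toppled at-capacity neighbor, so $T$ is connected in $\widetilde{\mathcal{G}}$ and hence in $\mathcal{G}$. Applying goodness to $S := T$ (of size $\leq M$ and containing $v$), the component of $v$ in $\mathcal{G}[T]$ is a tree; by connectivity this component is all of $T$, so $\widetilde{\mathcal{G}}' = \widetilde{\mathcal{G}}[T] = \mathcal{G}[T]$ is a tree. Next, for any node $u \notin T$ adjacent to $T$ in $\mathcal{G}$, I would apply goodness to $S := T \cup \{u\}$ (still of size $\leq M$) to conclude that $u$ has exactly one neighbor in $T$, since two or more edges into the tree $\widetilde{\mathcal{G}}'$ would create a cycle. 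Consequently, the candidate cascade structure obtained by attaching each such $u$ as a pendant leaf to $\widetilde{\mathcal{G}}'$ is itself a tree.

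To finish the proof I would couple the two dynamics via a common annealed-dissipation realization; abelian-sandpile monotonicity in dissipation then gives $T \subseteq T_{\mathcal{G}}$. Processing topplings in an order that keeps the two systems synchronized until they diverge, suppose toward contradiction that some $u \notin T$ is about to topple on $\mathcal{G}$, and consider the first such moment. Up to this instant only nodes in $T$ have toppled on either graph, and the partial cascade structure on $\mathcal{G}$ coincides with the pendant-extended tree just described. If $u$ were at capacity, then $u$ would not be a sink on $\widetilde{\mathcal{G}}$ and would receive exactly the same incoming grains from $T$ on both graphs, forcing $u$ to topple on $\widetilde{\mathcal{G}}$ as well---contradicting $u \notin T$. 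So $u$ is not at capacity; but as a non-root, non-at-capacity node in the tree-forming partial cascade, Corollary~\ref{corollary:rulespatterns} constrains $u$ to patterns~\ref{pattern:nonrootnotatcapacityreceives} or~\ref{pattern:nonrootnotatcapacitydissipates}, both of signature $(1,0)$ and zero topplings, so the parent $w \in T$ can have sent at most one grain toward $u$---making it impossible for $u$'s cumulative sand to exceed its capacity. This contradiction yields $T_{\mathcal{G}} = T$, hence $\mathcal{G}' = \widetilde{\mathcal{G}}'$, $\mathcal{A} = \widetilde{\mathcal{A}}$, and $\mathcal{G}^\dagger$ equals the pendant-extended tree.

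The main obstacle will be the coupling-and-pattern step: one must rigorously justify both that at-capacity nodes cannot first-diverge before non-at-capacity ones (because the grains reaching them are identical under the coupling) and that the partial cascade on $\mathcal{G}$ remains tree-forming at the instant any putative extra toppling would occur, so that Corollary~\ref{corollary:rulespatterns} can indeed be invoked to force the $(1,0)$ signature at a non-at-capacity pendant. The remaining steps reduce to systematic application of the goodness hypothesis and the definitions of $\widetilde{\mathcal{G}}$ and $\widetilde{\mathcal{G}}'$.
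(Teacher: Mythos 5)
Your handling of the degenerate case and of the tree structure coincides with the paper's: you apply goodness to the toppling set $T$ of the cascade on $\widetilde{\mathcal{G}}$ (connected, with $\widetilde{\mathcal{A}} \le M-1$ nodes) to conclude that $\widetilde{\mathcal{G}}'$ is a tree, and to $T \cup \{u\}$ to conclude that each non-toppling neighbor $u$ attaches by exactly one edge. Up to that point you are reproducing the paper's argument.

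The final step is where you diverge, and where the genuine gap lies. You invoke Corollary~\ref{corollary:rulespatterns} at the instant a putative extra node $u \notin T$ is about to topple on $\mathcal{G}$, in order to force signature $(1,0)$ on $u$ and conclude that its parent $w$ sent it at most one grain. But Corollary~\ref{corollary:rulespatterns} and Theorem~\ref{theorem:constraintsnonroot} are conditioned on the \emph{completed} cascade forming a finite tree $\mathcal{G}^\dagger$ --- precisely the property that is in doubt under your contradiction hypothesis, since an extra toppling at $u$ would feed sand back and could destroy the tree structure. You flag this as ``the main obstacle,'' but flagging it is not closing it: what is actually needed is a reason, valid for a cascade not yet known to form a tree, why $w$ cannot topple twice before its pendant neighbor $u$ topples once, and the only tool available at that level of generality is Lemma~\ref{lemma:anycascade}, which does not suffice by itself. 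The paper sidesteps the circularity by ordering the argument differently: it first shows that $\mathcal{G}^\dagger$ is a finite tree on purely structural grounds (the toppling set is a tree by goodness, and each non-toppling neighbor contributes a single pendant edge, so no cycle can arise), and only \emph{then} applies Theorem~\ref{theorem:constraintsnonroot} to the completed cascade to conclude that nodes not at capacity topple zero times, whence $\mathcal{G}' = \widetilde{\mathcal{G}}'$ and $\mathcal{A} = \widetilde{\mathcal{A}}$. Your additional machinery --- the coupling of dissipation realizations and the appeal to ``abelian-sandpile monotonicity'' to get $T \subseteq T_{\mathcal{G}}$ --- is neither used nor established anywhere in the paper (and the abelian property is not obviously inherited verbatim under annealed dissipation), and it does not supply the missing ingredient. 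Either prove the ``$w$ topples at most once before $u$ does'' claim from first principles, or adopt the paper's order of argument.
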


We demonstrate the usefulness of Theorem~\ref{theorem:AAtilde} by the example of Corollary~\ref{corollary:areabond}: under the conditions of Theorem~\ref{theorem:AAtilde}, bond percolation is the same as cascade area. Denote by $\widetilde{\mathcal{G}}^{(1-\epsilon)}$ the graph that results from bond percolation on $\widetilde{\mathcal{G}}$ with bond occupation probability $1-\epsilon$. That is, $\widetilde{\mathcal{G}}^{(1-\epsilon)}$ is the subgraph of $\widetilde{\mathcal{G}}$ in which every link of $\widetilde{\mathcal{G}}$ has probability $1 - \epsilon$ to appear.

\begin{corollary}[Correspondence between cascade area and bond percolation on nodes at capacity] \label{corollary:areabond}
Suppose that a cascade begins at a node $v$ in a graph $\mathcal{G}$ whose subgraph of nodes at capacity is $\widetilde{\mathcal{G}}$. If $M$ is an integer such that $(\mathcal{G},M,v)$ is good, then for any $x \in \{0, 1, \dots,   M-1\}$, the chance that the cascade area $\mathcal{A}$ equals $x$ is identical to the chance that $v$ belongs to a component of size $x$ in $\widetilde{\mathcal{G}}^{(1-\epsilon)}$.
\end{corollary}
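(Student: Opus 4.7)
The plan is to reduce the corollary to an equivalent claim about the cascade on the capacity subgraph $\widetilde{\mathcal{G}}$, and then couple that cascade with bond percolation on $\widetilde{\mathcal{G}}$ with retention probability $1-\epsilon$. Theorem~\ref{theorem:AAtilde} already supplies most of the first step: under the good hypothesis, whenever $\widetilde{\mathcal{A}} \le M-1$, we have $\mathcal{A} = \widetilde{\mathcal{A}}$. By additionally running the two processes under a coupled dissipation randomness (each common grain transmission uses the same Bernoulli trial; a grain sent toward a non-capacity node of $\mathcal{G}$ is absorbed there whereas it would have been lost on $\widetilde{\mathcal{G}}$), one checks by induction on toppling time that $\widetilde{\mathcal{A}} \le \mathcal{A}$ pointwise. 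Hence, for $x \in \{0, 1, \dots, M-1\}$, the event $\{\mathcal{A} = x\}$ forces $\widetilde{\mathcal{A}} \le x \le M-1$, so Theorem~\ref{theorem:AAtilde} applies and yields $\widetilde{\mathcal{A}} = x$; conversely $\{\widetilde{\mathcal{A}} = x\}$ directly satisfies the theorem's hypothesis. Thus $\Prob(\mathcal{A} = x) = \Prob(\widetilde{\mathcal{A}} = x)$, and it suffices to identify $\widetilde{\mathcal{A}}$ with the size of $v$'s component in $\widetilde{\mathcal{G}}^{(1-\epsilon)}$.

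For this identification, I will set up an edge-by-edge coupling between the dissipation trials of the $\widetilde{\mathcal{G}}$-cascade and the percolation variables. Attach to each edge $e$ of $\widetilde{\mathcal{G}}$ an independent $X_e \sim \mathrm{Bernoulli}(1-\epsilon)$, declare $e$ open in $\widetilde{\mathcal{G}}^{(1-\epsilon)}$ iff $X_e = 1$, and stipulate that the very first grain transmission along $e$ during the cascade (in whichever direction it happens to occur) dissipates iff $X_e = 0$, with any subsequent transmissions along $e$ drawing fresh, independent Bernoullis. Under this coupling I prove by double inclusion that the toppled set $\widetilde{\mathcal{G}}'$ coincides with the component of $v$ in $\widetilde{\mathcal{G}}^{(1-\epsilon)}$. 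For the ``$\subseteq$'' direction, I induct on toppling time: a non-root toppled node $u$ received its first grain from some already-toppled neighbor $w$ via the first transmission along $(w,u)$, which therefore succeeded, so $X_{(w,u)} = 1$; inductively $w$ lies in $v$'s open component, hence so does $u$. For the ``$\supseteq$'' direction, given an open path $v = u_0, u_1, \dots, u_r = u$, I induct on $i$: once $u_{i-1}$ topples, it sends a grain to every $\widetilde{\mathcal{G}}$-neighbor, and the first transmission along $(u_{i-1}, u_i)$ arrives because $X_{e_i} = 1$; since $u_i$ is at capacity, this single grain forces it to topple.

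The subtle point, which I view as the main obstacle, is justifying that later (non-first) transmissions along an edge cannot introduce topplings missed by the percolation variables. This I handle by a sand-accounting observation: a node $w$ of degree $k$ needs $k$ further grains after its first toppling in order to topple a second time, which forces every one of its $k$ neighbors to have toppled at least once. Thus if $w$ has a neighbor $u$ that never topples in the cascade on $\widetilde{\mathcal{G}}$, then $w$ cannot topple twice and so never sends a second grain anywhere, ruling out any scenario in which a ``later'' transmission rescues a node that the first-transmission variables failed to reach. The boundary case $v \notin \widetilde{\mathcal{G}}$ is handled by convention: both $\widetilde{\mathcal{A}}$ and the size of $v$'s component in $\widetilde{\mathcal{G}}^{(1-\epsilon)}$ equal $0$, completing the proof of Corollary~\ref{corollary:areabond}.
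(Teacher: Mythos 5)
Your proposal is correct and follows essentially the same route as the paper: reduce to the cascade on $\widetilde{\mathcal{G}}$ via Theorem~\ref{theorem:AAtilde}, then identify the toppled set with the open cluster of $v$ under a bond-percolation coupling, which is exactly what the paper compresses into the sentence that both objects are ``built by adjoining at-capacity neighbors of leaves independently with probability $1-\epsilon$.'' Your version is more explicit on two points the paper glosses over---the monotone coupling $\widetilde{\mathcal{A}} \le \mathcal{A}$ needed to invoke the theorem on the event $\{\mathcal{A}=x\}$, and the fact that repeated transmissions along an edge cannot rescue an unreached node (which is cleanest to cite from Theorem~\ref{theorem:constraintsroot} and Lemma~\ref{lemma:Gprimetree}\ref{lemma:Gprimetree:nonrootchildrennminusone} rather than from raw grain counting, since counting alone does not exclude one neighbor contributing two grains while another contributes none)---but these are refinements, not a different argument.
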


Corollary~\ref{corollary:areabond} enables a shortcut to estimating the cascade area distribution. Given a random root node $v$ in a graph $\mathcal{G}$ drawn from some ensemble of random graphs, if the neighborhood of $v$ is a tree (specifically, if $(\mathcal{G},M,v)$ is good), then Corollary~\ref{corollary:areabond} guarantees that the cascade area distribution is identical to the component size distribution in bond percolation on the subgraph of nodes at capacity, at least for cascade area smaller than $M$.

More generally, if $(\mathcal{G},M,v)$ is good, then by Theorem~\ref{theorem:AAtilde} a cascade on $\mathcal{G}$ starting from root $v$ and of area smaller than $M$ forms a finite tree $\mathcal{G}^\dagger$, so Theorems~\ref{theorem:constraintsroot}--\ref{theorem:constraintsnonroot} and Corollary~\ref{corollary:rulespatterns} hold. This conclusion allows the calculation of not only the cascade area but also of the inner workings of the cascade, including the cascade size.

We leave for future work the task of characterizing the probability that $(\mathcal{G},M,v)$ is good for various random graphs. Large configuration model random graphs (including random $3$-regular graphs) are locally tree-like, so for a random node $v$ it is likely that $(\mathcal{G},M,v)$ is good for reasonably large $M$. We explore such random graphs next.
%
%%****************************************************************
\section{Zero-parameters approximation of the BTW process on random $3$-regular graphs\label{section:threereg}}
Our goal in this section and the next is to calculate the probability distributions of cascade area and size for the BTW process on networks using the rigorous results of Sec.~\ref{section:microscopic} whenever possible. In this section, we focus on random $3$-regular networks and use self-consistency arguments to obtain a zero-parameters model that poises itself at a stationary state. 

Section~\ref{subsection:threereg:area} uses standard percolation methods (justified by Corollary~\ref{corollary:areabond}) to estimate the cascade area distribution using a single-type branching process. This step facilitates introducing the multitype branching process in Sec.~\ref{subsection:threereg:areasize} that harnesses the powerful results in Sec.~\ref{section:microscopic} to obtain both the cascade size and area distributions. Up to this point, the model has two parameters that need to be fixed (i.e., $\psi_2$ and $\phi_{22}$). In Sec.~\ref{subsection:threereg:changesYZ}, we further improve the model, so that it predicts the expected changes of its own parameters. Finally, Sec.~\ref{subsection:threereg:bootstrap} uses the assumption that these parameters reached an equilibrium, which results in our self-consistent, zero-parameters model.
%
%********************************
\subsection{``Standard'' percolation: cascade area\label{subsection:threereg:area}}
In order to facilitate the introduction of new techniques in the following sections, here we outline a rather standard percolation approach to calculate cascade area for the BTW sandpile process on a random $3$-regular network $\mathcal{G}$. The intuition behind the calculation is that if the neighborhood of the root is a tree and if the cascade area is small enough to fit inside this tree neighborhood, then Corollary~\ref{corollary:areabond} shows that cascade area is given by bond percolation on the subgraph $\widetilde{\mathcal{G}}$ of nodes at capacity.

Suppose, however, that $\widetilde{\mathcal{G}}$ is unknown to us, and that we only know three assumptions: (\emph{i}) $\mathcal{G}$ is a large random $3$-regular network; (\emph{ii}) a uniformly random node is at capacity with probability $\psi_2$; and (\emph{iii}) a node adjacent to a node at capacity is itself at capacity with probability $\phi_{22}$. We thus \emph{approximate} $\widetilde{\mathcal{G}}$ from this information, and we obtain an estimate of the cascade area distribution from the component size distribution of the graph $\widetilde{\mathcal{G}}^{(1-\epsilon)}$, the result of bond percolation on $\widetilde{\mathcal{G}}$ with bond occupancy $1-\epsilon$.

To this end, we use a standard single-type branching process approach~\cite{Newman2001,Newman2002} based on probability generating functions (PGFs). We define the cascade area PGF $\funH(x) \equiv \sum_{a = 0}^\infty \Prob(\text{area} = a) x^a$; the dummy variable $x$ is called the \emph{generator} for the cascade area. The function $\funH(x)$ is obtained by solving the system
\begin{subequations}%
\begin{align}%
  \funF(x) & = [1 - (1-\epsilon) \phi_{22}] + (1-\epsilon) \phi_{22} x \bigl[\funF(x)\bigr]^2 \label{eq:threereg:standard:F} \\
  \funH(x) & = (1 - \psi_2) + \psi_2 x \bigl[ \funF(x) \bigr]^3 . \label{eq:threereg:standard:H}
\end{align}\label{eq:threereg:standard}%
\end{subequations}%
The PGF $\funF(x)$ [Eq.~\eqref{eq:threereg:standard:F}] gives the contribution to the cascade area from a node sending a grain to a neighbor $v$ that has not yet toppled: the grain reaches node $v$ with probability $1-\epsilon$, and $v$ is at capacity with probability $\phi_{22}$. If both these events occur, then $v$ topples (factor $x$) and sends grains toward its $3$ neighbors, $2$ of which have not yet toppled (factor $[\funF(x)]^2$). In $\funH(x)$ [Eq.~\eqref{eq:threereg:standard:H}], the root is initially at capacity with probability $\psi_2$, in which case it topples (factor $x$) and sends a grain toward its $3$ neighbors (factor $[\funF(x)]^3$).
%
%********************************
\subsection{Inner workings: cascade size\label{subsection:threereg:areasize}}
We now improve Eq.~\eqref{eq:threereg:standard} to explicitly obtain the cascade size distribution in addition to the cascade area. This time, instead of Corollary~\ref{corollary:areabond}, we base our approach on the more fundamental Corollary~\ref{corollary:rulespatterns} and on a pattern representation of the cascade.

Keeping the generator $x$ for the cascade area, we introduce the generator $w$ for the cascade size such that $\funH(w,x) \equiv \sum_{a = 0}^\infty \sum_{s = 0}^\infty \Prob(\text{area} = a, \text{size} = s) x^a w^s$. Instead of $F(x)$ in Eq.~\eqref{eq:threereg:standard:F}, we need two families of functions, $\funA{}{n}(w,x)$ and $\funB{}{n}(w,x)$, which track the contributions of patterns of signature $(n,n-1)$ and $(n,n)$, respectively. Dropping explicit dependencies in $w$ and $x$ for simplicity, the system of equations becomes
\begin{subequations}%
\allowdisplaybreaks%
\begin{align}%
  \funA{}{1} & = \epsilon + \left(\frac{\epsilon}{1 - \epsilon}\right) \funB{}{1} + (1-\epsilon) (1-\phi_{22}) \label{eq:threereg:areasize:A1} \\
  \!\funA{}{n > 1} & = \epsilon \funB{}{n-1} + \left( \frac{\epsilon}{1 - \epsilon}\right) \funB{}{n} \label{eq:threereg:areasize:An} \\
  & + (1 - \epsilon)^{2n-1} \phi_{22} xw^{n-1} \left[ \bigl(\funA{}{n-1} + \funB{}{n-1} \bigr)^2 - \bigl( \funB{}{n-1} \bigr)^2 \right] \nonumber \\
  \funB{}{n} & = (1 - \epsilon)^{2n} \phi_{22} xw^n \bigl(\funA{}{n} + \funB{}{n} \bigr)^2 \label{eq:threereg:areasize:B} \\
  \funH & = (1-\psi_2) + \psi_2 x \sum_{\mathclap{n=1}}^\infty w^n \left[ \bigl(\funA{}{n} + \funB{}{n} \bigr)^3 - \bigl( \funB{}{n} \bigr)^3 \right]. \label{eq:threereg:areasize:H}
\end{align}\label{eq:threereg:areasize}%
\end{subequations}%
Next we explain how to obtain Eq.~\eqref{eq:threereg:areasize}. 

The factor $\funB{}{n}(w,x)$ [Eq.~\eqref{eq:threereg:areasize:B}] corresponds to the contribution of a node $v$ with pattern of signature $(n,n)$ [i.e., pattern~\ref{pattern:nonrootatcapacityreceivesnormal} in Corollary~\ref{corollary:areabond} and Fig.~\ref{figure:patterns}]. The factor $(1-\epsilon)^{2n}$ accounts for the probability that all the grains traveling from and to the parent of $v$ do not dissipate, while $\phi_{22}$ accounts for the probability that $v$ is at capacity at the beginning of the cascade (knowing that its parent was at capacity at the beginning of the cascade). The generators $xw^n$ count $v$'s contribution of $1$ to the cascade area and $n$ to the cascade size. Finally, the factor $(\funA{}{n} + \funB{}{n} )^2$ accounts for the contribution of the two children of $v$, each of which may have signature $(n,n-1)$ or $(n,n)$.

The three terms composing $\funA{}{1}(w,x)$ [Eq.~\eqref{eq:threereg:areasize:A1}] correspond to the contribution of nodes $v$ that have different patterns of signature $(1,0)$. The first term, $\epsilon$, accounts for the probability $\epsilon$ that the grain sent toward $v$ dissipates before reaching it [i.e., pattern~\ref{pattern:nonrootnotatcapacitydissipates} or~\ref{pattern:nonrootatcapacitydissipates} with $n = 1$], in which case there is no contribution to the cascade area nor size. The second term considers the possibility that $v$ topples, but the grain sent to its parent dissipates before reaching it [i.e., pattern~\ref{pattern:nonrootatcapacitydissipatesout}]: expanding that second term of Eq.~\eqref{eq:threereg:areasize:A1} gives
\begin{align}
  \left( \frac{\epsilon}{1 - \epsilon} \right) \funB{}{1} & = (1 - \epsilon) \epsilon \phi_{22} xw \bigl(\funA{}{1} + \funB{}{1} \bigr)^2 , \nonumber
\end{align}
where the factor $(1 - \epsilon) \epsilon$ accounts for the probability that $v$ receives the grain sent by its parent, but not the converse; and the rest of the expression is obtained in the same way as in $\funB{}{1}$. Finally, the third term of $\funA{}{1}$ corresponds to the pattern~\ref{pattern:nonrootnotatcapacityreceives}: $v$ receives the grain sent by its parent [factor $(1-\epsilon)$], but $v$ does not topple because it is initially not at capacity (factor $1 - \phi_{22}$).

Similarly, the three terms composing $\funA{}{n}(w,x)$ with $n > 1$ [Eq.~\eqref{eq:threereg:areasize:An}] correspond to the contribution of the different possible patterns of signature $(n,n-1)$. Here the first term corresponds to the case in which the last grain sent by the parent dissipates [i.e., pattern~\ref{pattern:nonrootatcapacitydissipatesin}], which amounts to $\funB{}{n-1}$ with a factor $\epsilon$ for the extra dissipation. Likewise, the second term corresponds to the case in which the last grain sent by $v$ to its parent dissipates [i.e., pattern~\ref{pattern:nonrootatcapacitydissipatesout}], which amounts to $\funB{}{n}$ with a factor $\epsilon/(1 - \epsilon)$ rectifying the probability for one extra sand dissipation and one fewer successful sand transfer. The last term of Eq.~\eqref{eq:threereg:areasize:An} corresponds to the pattern~\ref{pattern:nonrootatcapacityreceivesspecial}: none of the grains exchanged between $v$ and its parent dissipate [factor $(1-\epsilon)^{2n-1}$]; $v$ is initially at capacity (factor $\phi_{22}$); and $v$ topples $n-1$ times (factor $xw^{n-1}$). The remaining factor $[ (\funA{}{n-1} + \funB{}{n-1} )^2 - ( \funB{}{n-1} )^2 ]$ accounts for the contribution of the two children of this pattern, each of which may have signature $(n-1,n-2)$ or $(n-1,n-1)$, except not both may simultaneously have signature $(n-1,n-1)$.

Finally, $\funH(w,x)$ [Eq.~\eqref{eq:threereg:areasize:H}] tracks the contribution of root patterns. With probability $1-\psi_2$, the root is not initially at capacity, so it does not topple [i.e., pattern~\ref{pattern:rootnotatcapacity}]. Conversely, with probability $\psi_2$, the root is initially at capacity and topples $n \ge 1$ times (factor $x w^n$). The three children of this pattern~\ref{pattern:rootatcapacity} may have signature $(n,n-1)$ or $(n,n)$, but not all of them may have signature $(n,n)$ \{factor $[ (\funA{}{n} + \funB{}{n} )^3 - ( \funB{}{n} )^3 ]$\}.

One may check that setting $w = 1$ in Eq.~\eqref{eq:threereg:areasize} recovers Eq.~\eqref{eq:threereg:standard}: first verify that the ansatz $\funB{}{n}(1,x) = \funA{}{n+1}(1,x) + \funB{}{n+1}(1,x)$ satisfies Eq.~\eqref{eq:threereg:areasize}, then observe that $\funF(x) = \funA{}{1}(1,x) + \funB{}{1}(1,x)$ and that $\funH(x) = \funH(1,x)$ through a telescoping series.

Except for the structure of the network $\mathcal{G}$ and for the dissipation $\epsilon$ (with the case $\epsilon \to 0$ being of particular interest), the BTW sandpile process on network defined in Sec.~\ref{subsection:process:BTW} has no free parameter. However, both Eq.~\eqref{eq:threereg:standard} (based on a standard percolation approach) and the more refined Eq.~\eqref{eq:threereg:areasize} (accounting for the inner workings of a cascade) depend on two unknown quantities: $\psi_2$ and $\phi_{22}$. These two quantities are not ``real parameters'' of the BTW process; instead they reflect the state of the system once it has reached equilibrium.

As discussed in Sec.~\ref{subsection:motivation:selfconsistency}, one could measure $\psi_2$ and $\phi_{22}$ from the steady state of numerical simulations (i.e., $\widehat{\mathcal{S}}$), and then use these values in Eq.~\eqref{eq:threereg:standard} or in Eq.~\eqref{eq:threereg:areasize} to estimate the probability distributions for cascade area and/or size [which amounts to the model $(\mathcal{B},\mathcal{M},\widehat{\mathcal{M}}_{\widehat{\mathcal{S}}})$]. Proceeding in this way suffers the major disadvantage that results cannot be obtained outside the regime in which simulations were performed. Moreover, due to the reasons discussed in Sec.~\ref{subsection:motivation:selfconsistency}, the resulting model would be very inaccurate: using values from Eq.~\eqref{eq:psiphi_simulation}, both Eqs.~\eqref{eq:threereg:standard}--\eqref{eq:threereg:areasize} predict a branching factor of $R_0 = 2 (1-\epsilon) \phi_{22} \approx 1.07$, which is supercritical. Hence, feeding the model the $\psi_2$ and $\phi_{22}$ observed in simulations would predict the existence of ``giant cascades'' spanning a considerable fraction of the network in the limit $N \to \infty$, in contradiction with numerical simulations. (Note that the non-giant cascades would be affected by an exponential cutoff.) The next sections show how a self-consistency argument provides a zero-parameters model not subject to these disadvantages [which amounts to the model $(\mathcal{B},\mathcal{M},\widehat{\mathcal{M}})$].
%
%********************************
\subsection{Changes in the number of $i$-sand nodes and $ii$-sand links\label{subsection:threereg:changesYZ}}
Here we augment the multitype branching process in  Eq.~\eqref{eq:threereg:areasize} to obtain the effect of a cascade on the numbers of $i$-sand nodes and of $ii$-sand links in the network. This augmentation is a crucial step toward obtaining the zero-parameters model presented next: we will poise the model at an equilibrium by requiring that the quantities $\psi_i$ and $\phi_{ij}$ are on average unaffected by a cascade.

As before, $w$ and $x$ are generators for the cascade size and area, respectively. In addition, we define vectors of generators $\ve{y} = (y_0, y_1, y_2)$ and $\ve{z} = (z_0, z_1, z_2)$ such that $y_i$ and $z_i$ generate the changes in the numbers of $i$-sand nodes and of $ii$-sand links, respectively. Positive powers of $y_i$ or $z_i$ correspond to an increases in the respective counts, whereas negative powers correspond to decreases. For convenience, we define
\begin{align}%
  \funYy{i}{i\p} \equiv \frac{y_{i\p}}{y_i}, \ 
  \funZz{i}{j}{i\p}{j\p} \equiv \frac{1 + \delta_{i'j'}(z_{i'}-1)}{1 + \delta_{ij}(z_i-1)}, \ 
  \funLz{i}{i\p} = \sum_{j=0}^2 \phi_{ij} \funZ{i}{j}{i\p}{j}, \label{eq:threereg:YZL}
\end{align}%
where $\delta_{ij}$ is Kronecker's delta. A factor of $\funYy{i}{i\p}$ should be included whenever an $i$-sand node becomes $i'$-sand because $\funYy{i}{i\p}$ accounts for the change in the amount of sand on the node associated to this pattern (i.e., one fewer $i$-sand node and one more $i'$-sand node). Similarly, a factor of $\funZz{i}{j}{i\p}{j\p}$ should be included whenever an $ij$-sand link becomes $i'j'$-sand. Non-root patterns (whether they topple or not) are ``responsible'' for tracking the factor $\funZz{i}{j}{i\p}{j\p}$ due to the link joining them to their parent. We call a \emph{leaf} a node $v$ that does not receive sand but that has a neighbor $u$ that does receive sand (i.e., $v$ is a ``leaf'' of the branching process). If $u$ was initially $i$-sand and ends up being $i'$-sand after the cascade, then $v$ has probability $\phi_{ij}$ to initially be (and to remain) $j$-sand, so a factor $\funLz{i}{i\p}$ should be included for the link joining $u$ to $v$.

In order to account for the factor $\funZz{i}{j}{i\p}{j\p}$, a non-root pattern must know how many grains of sand end up being on its parent at the end of the cascade. Hence, we define the two families of functions $\funAxyz{i\p}{n}$ and $\funBxyz{i\p}{n}$ that track the contributions of nodes that have patterns of signature $(n,n-1)$ and $(n,n)$, respectively, and that have parents who end up being $i'$-sand after the cascade. The new PGF $\funHxyz$ is now obtained by solving a system of equations of structure very similar to Eq.~\ref{eq:threereg:areasize}
\begin{subequations}%
\allowdisplaybreaks%
\begin{align}%
  \begin{split}
  \funA{i\p}{1} & = \epsilon \funL{2}{i\p} + \left( \frac{\epsilon}{1 - \epsilon} \right) \funBip{1} \\
  & \quad + (1-\epsilon) \sum_{\mathclap{j=0}}^1 \phi_{2j} \funY{j}{j+1} \funZ{2}{j}{i'\!}{(j+1)} \bigl( \funL{j}{j+1} \bigr)^2
  \end{split}\label{eq:threereg:isandnodesiisandlinks:A1}\\%
  \begin{split}%
  \funA{i\p}{n>1} & = \epsilon \funBip{n-1} + \left( \frac{\epsilon}{1 - \epsilon} \right) \funBip{n} + (1 - \epsilon)^{2n-1} \phi_{22} xw^{n-1} \\
  & \quad \times \sum_{\mathclap{j'=1}}^2 \binom{2}{j\p-1} \funY{2}{j\p} \funZtop{j\p} \bigl(\funA{j\p}{n-1}\bigr)^{2-(j'-1)} \! \bigl( \funB{j\p}{n-1} \bigr)^{j'-1}
  \end{split}\label{eq:threereg:isandnodesiisandlinks:An}\\%
  \funB{i\p}{n} & = (1 - \epsilon)^{2n} \phi_{22} xw^n \! \sum_{\mathclap{j'=0}}^2 \binom{2}{j'} \funY{2}{j\p} \funZtop{j\p} \bigl( \funA{j\p}{n} \bigr)^{2-j\p} \bigl( \funB{j\p}{n} \bigr)^{j\p}\!\! \label{eq:threereg:isandnodesiisandlinks:B} \\
  \begin{split}%
  \funH & = \sum_{\mathclap{i=0}}^1 \psi_i \funY{i}{i+1} \bigl( \funL{i}{i+1} \bigr)^3 \\
  & \quad + \psi_2 x \sum_{\mathclap{n=1}}^\infty w^n \sum_{\mathclap{i'=0}}^2 \binom{3}{i\p} \funY{2}{i\p} \bigl( \funA{i\p}{n} \bigr)^{3-i\p} \bigl( \funB{i\p}{n} \bigr)^{i\p} . \label{eq:threereg:isandnodesiisandlinks:H}
  \end{split}%
\end{align}\label{eq:threereg:isandnodesiisandlinks}%
\end{subequations}%
Next we explain how to obtain Eq.~\eqref{eq:threereg:isandnodesiisandlinks}.

All the terms present in the equation for $\funB{}{n}$ [Eq.~\eqref{eq:threereg:areasize:B}] are still present in the equation for $\funB{i\p}{n}$ [Eq.~\eqref{eq:threereg:isandnodesiisandlinks:B}]. This time, we explicitly consider the signature of the two children of the node $v$ with pattern~\ref{pattern:nonrootatcapacityreceivesnormal}. Suppose $j' \in \{0,1,2\}$ of these children have signature $(n,n)$ and $2-j'$ have signature $(n,n-1)$. Summing the $2$ grains initially on the node $v$, the $n$ grains received from $v$'s parent, and the $nj' + (n-1)(2-j')$ received from $v$'s children, we obtain $3n+j'$ grains, enough for node $v$ to topple $n$ times and to end up $j'$-sand. A factor $\funY{2}{j\p}$ is required because $v$ was $2$-sand and becomes $j'$-sand. A factor $\funZtop{j\p}$ is required because the link from $v$ to its parent was $22$-sand and becomes $i'\!j'$-sand. The factor $\bigl( \funA{j\p}{n} \bigr)^{2-j\p} \bigl( \funB{j\p}{n} \bigr)^{j\p}$ accounting for $v$'s children passes on the information that $v$ ends up being $j'$-sand. The combinatorial factor $\binom{2}{j'}$ accounts for the number of ways to choose the signatures of the children.

Similarly, the terms present in the equation for $\funA{}{1}$ [Eq.~\eqref{eq:threereg:areasize:A1}] are also present in the equation for $\funA{i\p}{1}$ [Eq.~\eqref{eq:threereg:isandnodesiisandlinks:A1}]. The first term of Eq.~\eqref{eq:threereg:isandnodesiisandlinks:A1} corresponds to a leaf node [i.e., a node that does not receive any sand, pattern~\ref{pattern:nonrootnotatcapacitydissipates} or~\ref{pattern:nonrootatcapacitydissipates}], which requires a factor $\funL{2}{i\p}$. The second term is obtained similarly to $\funBip{1}$ in Eq.~\eqref{eq:threereg:isandnodesiisandlinks:B}. The third term corresponds to the pattern~\ref{pattern:nonrootnotatcapacityreceives}: for $j \in \{0,1\}$, the considered node $v$ is initially $j$-sand with probability $\phi_{2j}$, so $v$ receives a single grain (factor $\funY{j}{j+1}$) and does not topple. The link from $v$ to its parent mandates a factor $\funZ{2}{j}{i'\!}{(j+1)}$, and the two other neighbors of $v$ qualify as leaves, mandating a factor $\bigl( \funL{j}{j+1} \bigr)^2$.

Again, the terms present in the equation for $\!\funA{}{n > 1}$ [Eq.~\eqref{eq:threereg:areasize:An}] are also present in the equation for $\funA{i\p}{n>1}$ [Eq.~\eqref{eq:threereg:isandnodesiisandlinks:An}]. The first two terms are obtained similarly as in Eq.~\eqref{eq:threereg:isandnodesiisandlinks:B}. The last term corresponds to a node $v$ with pattern~\ref{pattern:nonrootatcapacityreceivesspecial}: $j\p-1 \in \{0,1\}$ of the $2$ children of $v$ have signature $(n-1,n-1)$, while the remaining $2-(j\p-1)$ children have signature $(n-1,n-2)$. Summing the $2$ grains initially on $v$, the $n$ grains received from $v$'s parent, and the $(n-1)(j\p-1) + (n-2)[2-(j\p-1)]$ grains received from $v$'s children, we obtain $3(n-1)+j'$ grains, enough for $v$ to topple $n-1$ times and end up $j'$-sand. Contributions similar to those in Eq.~\eqref{eq:threereg:isandnodesiisandlinks:B} are thus required.

Finally, the first term of the equation for $\funH$ [Eq.~\eqref{eq:threereg:isandnodesiisandlinks:H}] corresponds to the root not being at capacity [pattern~\ref{pattern:rootnotatcapacity}]. For $i \in \{0,1\}$, the root is $i$-sand with probability $\psi_i$ and thus becomes $(i+1)$-sand (factor $\funY{i}{i+1}$). Moreover, the $3$ neighbors of the root qualify as leaves [factor $\bigl (\funL{i}{i+1} \bigr)^3$]. The second term corresponds to the root being at capacity [pattern~\ref{pattern:rootatcapacity}]: $i' \in \{0,1,2\}$ of the root's children have signature $(n,n)$ and $3-i'$ have signature $(n,n-1)$. Summing the $2$ grains initially on the root, the $1$ grain dropped on the root to start the cascade, and the $ni' + (n-1)(3-i')$ grains received by the root from its children, we obtain $3n+i'$ grains, enough for the root to topple $n$ times and to end up $i'$-sand. The other factors follow the same logic as in Eq.~\eqref{eq:threereg:areasize:H}.

One may check that setting $y_i = z_i = 1$ for all $i \in \{0,1,2\}$ in Eq.~\eqref{eq:threereg:isandnodesiisandlinks} recovers Eq.~\eqref{eq:threereg:areasize}. Specifically, denoting $\ve{1} = (1 \ 1 \ 1)$, we have $\funY{i}{i\p}(\ve{1}) = \funZ{i}{j}{i\p}{j\p}(\ve{1}) = \funL{i}{i\p}(\ve{1}) = 1$, $\funA{i\p}{n}(w,x,\ve{1},\ve{1}) = \funA{}{n}(w,x)$ and $\funB{i\p}{n}(w,x,\ve{1},\ve{1}) = \funB{}{n}(w,x)$ for all the possible index values, which gives $\funH(w,x,\ve{1},\ve{1}) = \funH(w,x)$.
%
%********************************
\subsection{Bootstrapping $\psi_i$ and $\phi_{ij}$ \label{subsection:threereg:bootstrap}}
Now we obtain the self-consistent, zero-parameters model by enforcing equilibrium. The function $\funHxyz$ in Eq.~\eqref{eq:threereg:isandnodesiisandlinks:H} is a PGF of the multivariate probability distribution for the cascade size (generator $w$), the cascade area (generator $x$), the change in the number of $i$-sand nodes (generator $y_i$), and the change in the number of $ii$-sand links (generator $z_i$). By the standard techniques for PGFs, we obtain expectation values through differentiation with respect to the appropriate generator. In this case, we are interested in $h_{(i)}$, the average change in the number of $i$-sand nodes, and in $\eta_{(i)}$, the average change in the number of $ii$-sand links
\begin{align}%
\diffyH{i} & = \left. \frac{\partial \funH(1,1,\ve{y},\ve{1})}{\partial y_i} \right|_{\ve{y} = \ve{1}} &
\diffzH{i} & = \left. \frac{\partial \funH(1,1,\ve{1},\ve{z})}{\partial z_i} \right|_{\ve{z} = \ve{1}}.
\end{align}%
An explicit expression of these derivatives $\diffyH{i}$ and $\diffzH{i}$ is provided in~\cite{SM_BTW_PRL_from_perspective_of_long}.

By hypothesis, the system has reached a stationary state, so both these average changes should be zero. Hence, the relations
\begin{subequations}
\begin{align}%
\diffyH{i} = 0 \ \forall i \in \{ 0, 1, 2 \} \quad \text{and} \quad \diffzH{i} = 0 \ \forall i \in \{ 0, 1, 2 \} \label{eq:threereg:constraints:nonlin}
\end{align}%
provide $6$ constraints for the $12$ unknowns $\psi_i \, \forall i \in \{ 0, 1, 2 \}$ and $\phi_{ij} \, \forall i,j \in \{ 0, 1, 2 \}$. Only $5$ of these constraints are independent because each of the $\diffyH{i} = 0$ relations can be obtained from the two others. Because the $\psi_i$ are probabilities and the $\phi_{ij}$  are conditional probabilities, they must obey the $7$ additional, independent constraints
\begin{align}%
  \sum_{i = 0}^2 \psi_i = 1 , \ 
  \sum_{j = 0}^2 \phi_{ij} = 1 \ \forall i , \text{ and} \ 
  \psi_i \phi_{ij} = \psi_j \phi_{ji} \ \forall i,j.
\end{align}\label{eq:threereg:constraints}%
\end{subequations}%
Thus, we have a total of $12$ unknowns and $12$ independent constraints on them. Because the constraints~\eqref{eq:threereg:constraints:nonlin} are non-linear, \textit{a priori} there is no guarantee that a valid solution exists, nor that there is a single solution.

However, starting from educated guesses, numerical solution of the system~\eqref{eq:threereg:constraints} does provide values of $\psi_i$ and $\phi_{ij}$ that are consistent with those observed in the Monte Carlo simulations. For example, the solution of Eq.~\eqref{eq:threereg:constraints} for $\epsilon = 10^{-3}$ gives
\begin{subequations}
\begin{align}%
\begin{pmatrix} \psi_0 & \psi_1 & \psi_2 \end{pmatrix} 
& \approx \begin{pmatrix} 0.09574 & 0.35439 & 0.54987 \end{pmatrix}, \\
\begin{pmatrix}
 \phi_{00} & \phi_{01} & \phi_{02} \\
 \phi_{10} & \phi_{11} & \phi_{12} \\
 \phi_{20} & \phi_{21} & \phi_{22} \\
\end{pmatrix}
& \approx 
\begin{pmatrix} 0.00050 & 0.27802 & 0.72148 \\ 0.07511 & 0.34286 & 0.58203 \\ 0.12562 & 0.37512 & 0.49926 \end{pmatrix}.
\end{align}\label{eq:psiphi_analytical001}%
\end{subequations}%
Comparison with the results from simulations in Eq.~\eqref{eq:psiphi_simulation} reveals that, while the correspondence is not exact, a large fraction of the previously unexplained correlations is now accounted for. Part of the deviations is explained by the fact that Eq.~\eqref{eq:psiphi_simulation} is obtained from simulations with $N = 10^7$ nodes, while Eq.~\eqref{eq:psiphi_analytical001} assumes $N \to \infty$. As discussed in Sec.~\ref{subsection:motivation:selfconsistency}, other deviations are likely due to Eq.~\eqref{eq:psiphi_analytical001} only considering pairwise correlations, while correlations between the amount of sand on nodes arbitrarily far from one another could appear in the network simulations, especially for small $\epsilon$. Section~\ref{subsection:confmodel:areasize} provides guidelines on how future work could consider such higher-order correlations by obtaining a more refined version of Eq.~\eqref{eq:threereg:isandnodesiisandlinks}.

\begin{figure}
  \includegraphics{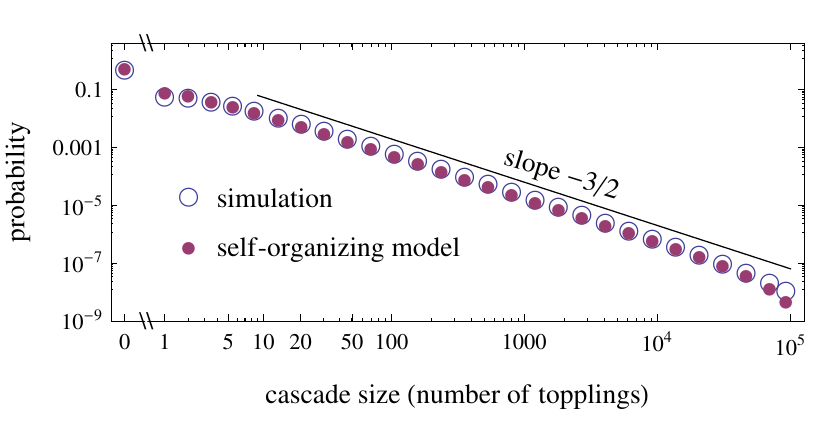}
  \caption{(Color online) Cascade size distribution for a $3$-regular network with $N = 10^7$ nodes and dissipation rate $\epsilon = 10^{-3}$. The data for both Monte Carlo simulations (open circles) and theory (filled circles) have been logarithmically binned. The theory is obtained for $\psi_i$ and $\phi_{ij}$ shown in Eq.~\ref{eq:psiphi_analytical001}, i.e., those satisfying the constraints [Eq.~\eqref{eq:threereg:constraints}] of the self-organizing model.\label{fig:size3reg}}
\end{figure}

Using $\psi_2 \approx 0.54987$ and $\phi_{22} \approx 0.49926$ [i.e., the values from Eq.~\eqref{eq:psiphi_analytical001}] in Eq.~\eqref{eq:threereg:areasize} provides estimates for the cascade area and size. Figure~\ref{fig:size3reg} confirms the accuracy of the resulting cascade size distribution by comparing it to numerical simulations. Because cascade size and area are often close to one another, we verify in Fig.~\ref{fig:sizeminusarea3reg} that Eq.~\eqref{eq:threereg:areasize} predicts the right distribution for the difference of a cascade's size and area. To the best of our knowledge, this approach is the first to allow the independent study of cascade size and area.

\begin{figure}
  \includegraphics{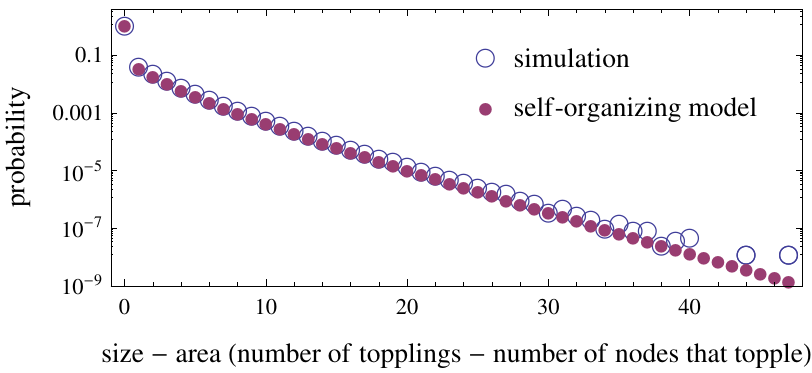}
  \caption{(Color online) Distribution of the difference between cascade size and cascade area for a $3$-regular network in the same conditions as in Fig.~\ref{fig:size3reg}. We consider the difference between the two quantities to facilitate the comparison. Note that the size of a cascade is always greater or equal to its area.\label{fig:sizeminusarea3reg}}
\end{figure}

In the range $0 < \epsilon < 0.01$, the $\psi_i$ and $\phi_{ij}$ show close-to-linear behavior (e.g., $\phi_{00} \approx \epsilon / 2$), and the total variation of each of these probabilities is less than $0.01$. Linear extrapolation provides the limiting behavior $\epsilon \to 0$
\begin{subequations}%
\begin{align}%
\lim_{\epsilon \to 0}
\begin{pmatrix} \psi_0 & \psi_1 & \psi_2 \end{pmatrix} 
& \approx \begin{pmatrix} 0.09519 & 0.35406 & 0.55075 \end{pmatrix} \\
\lim_{\epsilon \to 0}
\begin{pmatrix}
 \phi_{00} & \phi_{01} & \phi_{02} \\
 \phi_{10} & \phi_{11} & \phi_{12} \\
 \phi_{20} & \phi_{21} & \phi_{22} \\
\end{pmatrix}
& \approx 
\begin{pmatrix}
  0.00000 & 0.27676 & 0.72324 \\
  0.07441 & 0.34226 & 0.58333 \\
  0.12500 & 0.37500 & 0.50000
\end{pmatrix} \label{eq:analytical:philimit}
\end{align}\label{eq:analytical:psiphilimit}%
\end{subequations}
These results support our intuition that there should be no $00$-links in the limit $\epsilon \rightarrow 0$ and that the system tunes itself to the onset of a giant component (which happens at $\phi_{22} = 1/2$ when $\epsilon \rightarrow 0$). Other quantities appear to approach ratios of small integers: $\phi_{20} \approx 1/8$, $\phi_{21} \approx 3/8$, and $\phi_{12} \approx 7/12$. Note that these rational numbers are indicative only: excluding $\phi_{00}$ and $\phi_{22}$, we have no reason to offer as to why the $\phi_{ij}$ should be close to the ratio of small integers (and this proximity to simple ratios may well be coincidental).
%
%%****************************************************************
\section{Generalization to configuration model networks \label{section:confmodel}}
Here we show how the methods presented in Sec.~\ref{section:threereg} for $3$-regular networks generalize to configuration model networks. Hence, Sec.~\ref{subsection:confmodel:area} (resp.\ Sec.~\ref{subsection:confmodel:areasize}) repeats the treatment of Sec.~\ref{subsection:threereg:area} (resp.\ Sec.~\ref{subsection:threereg:areasize}) to predict the cascade area distribution (resp.\ cascade size and area distributions) on configuration model networks. However, we do not perform the two final steps (Secs.~\ref{subsection:threereg:changesYZ}--\ref{subsection:threereg:bootstrap}) that would provide a zero-parameters model: doing so reveals to be tedious in the case of the fully general configuration model. Nonetheless, future work could apply the same approach on a case-by-case basis to other network structures of interest.
%
%********************************
\subsection{``Standard'' percolation: cascade area \label{subsection:confmodel:area}}

We consider the same problem as in Sec.~\ref{subsection:threereg:area}---that is, to estimate the cascade area distribution using standard percolation techniques---but this time we consider the general case of a configuration model random graph $\mathcal{G}$ instead of a random $3$-regular graph. Again, we base our model on Corollary~\ref{corollary:areabond}: if the neighborhood of the root is a tree and if the cascade area is small enough to fit inside this tree neighborhood, then the cascade area is given by bond percolation on the subgraph $\widetilde{\mathcal{G}}$ of nodes at capacity. This preliminary step facilitates the presentation of Sec.~\ref{subsection:confmodel:areasize}, which considers the inner workings of the cascade to estimate both the cascade area and size distribution.

As described in Sec.~\ref{subsection:process:networks}, the configuration model random graph $\mathcal{G}$ is drawn from an ensemble specified by the degree distribution $\{p_k : k \geq 0\}$ and by $N$, the number of nodes in $\mathcal{G}$. We now make the assumption (approximation) that $\widetilde{\mathcal{G}}$, which is the subgraph of $\mathcal{G}$ induced by the nodes at capacity, is also a configuration model random graph. Specifically, we assume that $\widetilde{\mathcal{G}}$ is a configuration model random graph drawn from an ensemble specified by the degree distribution $\{\widetilde{p}_{\tilde{k}} : \tilde{k} \geq 0\}$ and by $\sigma N$, the number of nodes in $\widetilde{\mathcal{G}}$. Stated differently, $\sigma$ is the fraction of nodes in $\mathcal{G}$ that are at capacity at the beginning of the cascade, and $\widetilde{p}_{\tilde{k}}$ is the probability that a random node at capacity in $\mathcal{G}$ has $\tilde{k}$ neighbors at capacity in $\mathcal{G}$.

Following Newman~\cite{Newman2001,Newman2002}, we define $\widetilde{q}_{\tilde{k}}$ as the probability that a node in $\widetilde{\mathcal{G}}$ reached by following a random link of $\widetilde{\mathcal{G}}$ has degree $\tilde{k} + 1$ (i.e., has $\tilde{k}$ links in $\widetilde{\mathcal{G}}$ other than the one along which we reached it). Using the standard calculation for configuration model random graphs, we obtain $\widetilde{q}_{\tilde{k}}$ from $\widetilde{p}_{\tilde{k}}$ as $\widetilde{q}_{\tilde{k}} = ({\tilde{k}} + 1) \widetilde{p}_{\tilde{k} + 1}/\sum_{\tilde{k}'} \tilde{k}' \widetilde{p}_{\tilde{k}'}$. For convenience, we define the PGFs
\begin{align}%
  \funPt(\xi) & := \sum_{\tilde{k} = 0}^\infty \widetilde{p}_{\tilde{k}} \xi^{\tilde{k}}, &
  \funQt(\xi) & := \sum_{\tilde{k} = 0}^\infty \widetilde{q}_{\tilde{k}} \xi^{\tilde{k}} = \frac{ \frac{\diff\funPt(\xi)}{\diff \xi} }{\left.\frac{\diff\funPt(\xi)}{\diff\xi} \right|_{\xi = 1}}, \label{eq:confmodel:PQ}
\end{align}%
which encode the same information as the probability distributions $\widetilde{p}_{\tilde{k}}$ and $\widetilde{q}_{\tilde{k}}$ of degree and excess degree in $\widetilde{\mathcal{G}}$. We use these functions in Eq.~\eqref{eq:confmodel:PQ} to build a branching process providing the component sizes in $\widetilde{\mathcal{G}}^{(1-\epsilon)}$. By adapting~\cite{Newman2002}, we obtain the PGFs
\begin{subequations}%
\begin{align}%
  \funG(x) & = \epsilon + (1-\epsilon) x \funQt\biglb( \funG(x) \bigrb) \label{eq:confmodel:standard:F} \\
  \funH(x) & = (1 - \sigma) + \sigma x \funPt\biglb( \funG(x) \bigrb) . \label{eq:confmodel:standard:H}
\end{align}\label{eq:confmodel:standard}%
\end{subequations}%
The PGF $\funG(x)$ [Eq.~\eqref{eq:confmodel:standard:F}] generates the probability distribution for the size of the component of $\widetilde{\mathcal{G}}^{(1-\epsilon)}$ reached by following a random link of $\widetilde{\mathcal{G}}$. Note that this link in $\widetilde{\mathcal{G}}$ is also in $\widetilde{\mathcal{G}}^{(1-\epsilon)}$ with probability $1 - \epsilon$, so this component size equals zero with probability $\epsilon$. The full component size distribution of $\widetilde{\mathcal{G}}^{(1-\epsilon)}$ is generated by $x \funPt\biglb( \funG(x) \bigrb)$. A cascade of area zero occurs with probability $1 - \sigma$ (the fraction of nodes not at capacity). Hence, $\funH(x)$ [Eq.~\eqref{eq:confmodel:standard:H}] is the PGF for our approximation of the distribution of the area of cascades in $\mathcal{G}$.

One may verify that setting $\sigma = \psi_2$, $\funPt(\xi) = [(1-\phi_{22})+\phi_{22}\xi]^3$ and $\funQt(\xi) = [(1-\phi_{22})+\phi_{22}\xi]^2$ in Eq.~\eqref{eq:confmodel:standard} recovers the $3$-regular special case Eq.~\eqref{eq:threereg:standard}, with the correspondence $\funF(x) = (1-\phi_{22}) + \phi_{22}\funG(x)$. 
%
%********************************
\subsection{Inner workings: cascade size \label{subsection:confmodel:areasize}}

This section is to Sec.~\ref{subsection:confmodel:area} as Sec.~\ref{subsection:threereg:areasize} is to Sec.~\ref{subsection:threereg:area}: we improve Eq.~\ref{eq:confmodel:standard} by accounting for the inner workings of a cascade to explicitly obtain both the cascade area and size distribution (analogously to our improvement to Eq.~\eqref{eq:threereg:standard} in Sec.~\ref{subsection:threereg:areasize} for random $3$-regular graphs). In addition to $\funPt(\xi)$ and $\funQt(\xi)$ [Eq.~\eqref{eq:confmodel:PQ}], we define the related PGFs
\begin{equation}%
\funPts(\xi) = \sum_{\tilde{k}=0}^\infty \widetilde{p}_{\tilde{k}}^{\,*} \xi^{\tilde{k}} \quad \text{and} \quad
\funQts(\xi) = \sum_{\tilde{k}=0}^\infty \widetilde{q}_{\tilde{k}}^{\,*} \xi^{\tilde{k}}.
\end{equation}
Here $\widetilde{p}_{\tilde{k}}^{\,*}$ is the probability that a node at capacity has $\tilde{k}$ neighbors in $\mathcal{G}$ and that all $\tilde{k}$ neighbors are at capacity. Similarly, $\widetilde{q}_{\tilde{k}}^{\,*}$ is the probability for a node at capacity to have $\tilde{k} + 1$ neighbors in $\mathcal{G}$ that are all at capacity, given that this node was reached by following a link from a node that was at capacity. Note that, unlike $\funPt(1)$ and $\funQt(1)$, neither $\funPts(1)$ nor $\funQts(1)$ will typically equal one. The purpose of $\funPts(\xi)$ and $\funQts(\xi)$ is to properly consider particular constraints imposed by Corollary~\ref{corollary:rulespatterns}, such as the fact that the children of patterns~\ref{pattern:rootatcapacity} and~\ref{pattern:nonrootatcapacityreceivesspecial} may not all simultaneously have signature $(n,n)$.

We again use the generator $x$ to track the cascade area and the generator $w$ to track the cascade size. We define the two families of PGFs $\funC{n}(w,x)$ and $\funD{n}(w,x)$ that are related to the families $\funA{}{n}(w,x)$ and $\funB{}{n}(w,x)$, respectively, except that we are now only considering nodes at capacity (i.e., nodes in $\widetilde{\mathcal{G}}$). Specifically, $\funC{n}(w,x)$ [resp.\ $\funD{n}(w,x)$] accounts for the contribution of a non-root pattern at capacity of signature $(n,n-1)$ [resp.\ $(n,n-1)$], i.e., patterns~\ref{pattern:nonrootatcapacitydissipates}, \ref{pattern:nonrootatcapacityreceivesspecial}, \ref{pattern:nonrootatcapacitydissipatesin}, and \ref{pattern:nonrootatcapacitydissipatesout} [resp.\ pattern~\ref{pattern:nonrootatcapacityreceivesnormal}]. The relations defining $\funH(w,x)$ are
\begin{subequations}%
\allowdisplaybreaks%
\begin{align}%
  \funC{1} & = \epsilon \left( 1 + \frac{\funD{1}}{1-\epsilon} \right) \label{eq:confmodel:areasize:C1} \\
  \funC{2} & = \epsilon \left( \!\funD{1} + \frac{\funD{2}}{1-\epsilon} \right) + (1-\epsilon)^3 xw \Bigl[ \funQt\bigl( \funC{1} + \! \funD{1} \bigr) - \funQts\bigl( \funD{1} \bigr) \Bigr] \label{eq:confmodel:areasize:C2} \\
  \!\!\!\funC{n>2} & = \epsilon \left( \funD{n-1} + \frac{\funD{n}}{1-\epsilon} \right) \label{eq:confmodel:areasize:Cn} \\
  & \ + (1-\epsilon)^{2n-1} xw^{n-1} \Bigl[ \funQts\bigl( \funC{n-1} + \funD{n-1} \bigr) - \funQts\bigl( \funD{n-1} \bigr) \Bigr] \nonumber \\
  \funD{1} & = (1-\epsilon)^2 x w \funQt\bigl( \funC{1} + \funD{1} \bigr) \label{eq:confmodel:areasize:D1} \\
  \!\!\!\funD{n>1} & = (1-\epsilon)^{2n} x w^n \funQts\bigl( \funC{n} + \funD{n} \bigr) \label{eq:confmodel:areasize:Dn} \\
  \begin{split}
  \funH & = (1-\sigma) + \sigma x \biggl\{ w \Bigl[ \funPt\bigl(\funC{1}+\funD{1}\bigr) - \funPts\bigl(\funD{1}\bigr) \Bigr] \label{eq:confmodel:areasize:H} \\
  & \qquad \quad + \sum_{n=2}^\infty w^n \Bigl[ \funPts\bigl(\funC{n}+\funD{n}\bigr) - \funPts\bigl(\funD{n}\bigr) \Bigr] \biggr\} .
  \end{split}
\end{align}%
\label{eq:confmodel:areasize}%
\end{subequations}%

The $\funD{n}(w,x)$ [Eqs.~\eqref{eq:confmodel:areasize:D1}--\eqref{eq:confmodel:areasize:Dn}, pattern~\ref{pattern:nonrootatcapacityreceivesnormal}] are obtained very similarly to the $\funB{}{n}$ [Eq.~\eqref{eq:threereg:areasize:B}]. In the case of $\funD{1}(w,x)$, there is no particular constraint to be satisfied, so the children of the considered pattern contribute a total of $\funQt\bigl( \funC{1} + \funD{1} \bigr)$ (the children not at capacity, and hence not in $\widetilde{\mathcal{G}}$, each contribute a factor $1$). In the case of $\funD{n}(w,x)$ for $n>1$ [Eq.~\eqref{eq:confmodel:areasize:Dn}], we know that the children of the considered node all topple at least once, so they were all at capacity (i.e., they are all in $\widetilde{\mathcal{G}}$) and thus contribute a total of $\funQts\bigl( \funC{n} + \funD{n} \bigr)$.

Obtaining $\funC{1}(w,x)$ [Eq.~\eqref{eq:confmodel:areasize:C1}] is very similar to obtaining $\funA{}{1}(w,x)$ [Eq.~\eqref{eq:threereg:areasize:A1}], except that the patterns~\ref{pattern:nonrootnotatcapacityreceives}--\ref{pattern:nonrootnotatcapacitydissipates} need not be considered because the considered node is know to be at capacity. Similarly, the case of $\funC{2}(w,x)$ and $\funC{n>2}(w,x)$ [Eqs.~\eqref{eq:confmodel:areasize:C2}--\eqref{eq:confmodel:areasize:Cn}] is similar to the one of $\funA{}{n>1}(w,x)$ [Eq.~\eqref{eq:threereg:areasize:An}]. For $\funC{2}(w,x)$, only the contribution of the children in the last term [corresponding to pattern~\ref{pattern:nonrootatcapacityreceivesspecial}] requires further explanations. Using $\funQt\bigl( \funC{1} + \! \funD{1} \bigr)$ would consider that children may have signature $(1,0)$ or $(1,1)$, which includes a spurious term where all children have signature $(1,1)$; subtracting $\funQts\bigl( \funD{1} \bigr)$ cancels this spurious term. A similar cancellation is used in $\funC{n>2}(w,x)$, but this time all the children are known to be at capacity (since they topple at least once).

Finally, $\funH(w,x)$ [Eq.~\eqref{eq:confmodel:areasize:H}] is obtained very similarly to its counterpart for $3$-regular networks [Eq.~\eqref{eq:threereg:areasize:H}]. With probability $1-\sigma$, the root is not at capacity [pattern~\ref{pattern:rootnotatcapacity}]. With probability $\sigma$, the root is at capacity [pattern~\ref{pattern:rootatcapacity}]: a cancellation similar to the one of $\funC{2}(w,x)$ [resp.\ $\funC{n>2}(w,x)$] is used if the root topples once (resp.\ more than once).

One may recover Eq.~\eqref{eq:confmodel:standard} by setting $w = 1$ [with the correspondences $G(x) = \funC{1}(1,x) + \funD{1}(1,x)$ and $H(x) = H(1,x)$, using the ansatz $\funD{n}(1,x) = \funC{n+1}(1,x) + \funD{n+1}(1,x)$]. Alternatively, one may recover Eq.~\eqref{eq:threereg:areasize} by setting $\sigma = \psi_2$, $\funPt(\xi) = [(1-\phi_{22})+\phi_{22}\xi]^3$, $\funQt(\xi) = [(1-\phi_{22})+\phi_{22}\xi]^2$, $\funPts(\xi) = (\phi_{22}\xi)^3$, and $\funQts(\xi) = (\phi_{22}\xi)^2$, with the correspondences $\funA{}{1}(w,x) = (1-\phi_{22})+\funC{1}(w,x)\phi_{22}$, $\funA{}{n>1}(w,x) = \funC{n>1}(w,x)\phi_{22}$ and $\funB{}{n}(w,x) = \funD{n}(w,x)\phi_{22}$; the PGF $\funH(w,x)$ remains $\funH(w,x)$. 

It has been mentioned earlier that one could improve the model presented in Sec.~\ref{section:threereg} by accounting for $3$-star correlations. A starting point for doing so could be the following special case of the configuration model considered in Eq.~\eqref{eq:confmodel:areasize} obtained by fixing $\sigma = \psi_2$ and
\begin{subequations}%
\begin{align}%
\funPt(\xi) & = \theta_0 + \theta_1 \xi + \theta_2 \xi^2 \! + \theta_3 \xi^3 \! &
\funPts(\xi) & = \theta_3 \xi^3 \\
\funQt(\xi) & = \frac{\theta_1 + 2\theta_2 \xi + 3\theta_3 \xi^2}{\theta_1 + 2\theta_2 + 3\theta_3} &
\funQts(\xi) & = \frac{3\theta_3 \xi^2}{\theta_1 + 2\theta_2 + 3\theta_3}.
\end{align}\label{eq:confmodel:threeregwiththreestar}%
\end{subequations}%
Directly using the values of Eq.~\eqref{eq:motivation:oneoverkisbad:theta:measured} would cause the same problems as described at the end of Sec.~\ref{subsection:threereg:areasize}. In particular, the predicted branching factor $R_0 = (2\theta_2+6\theta_3)/(\theta_1+2\theta_2+3\theta_3) \approx 1.03$ is still supercritical. We expect that accounting for higher order correlations should gradually decrease the predicted branching factor towards unity~\footnote{
%
% THE FOOTNOTE
Indeed, the sand in large clusters of nodes at capacity will likely disperse due to large cascades occurring on those nodes. Hence, we expect the equilibrium state to contain smaller clusters of nodes at capacity than one would expect in a randomized surrogate.
}.

More generally for configuration model networks, one will face the same kind of problems when feeding values obtained from simulations into Eq.~\eqref{eq:confmodel:areasize}. To circumvent this problem, one could continue to generalize the method of Sec.~\ref{section:threereg} by implementing a system similar to Eq.~\eqref{eq:threereg:isandnodesiisandlinks}, this time for configuration model networks, and then perform an analysis similar to Sec.~\ref{subsection:threereg:bootstrap} to obtain the equilibrium state without the need for simulations. However, considering the general case may prove tedious, and specifically considering special cases of interest [e.g., Eq.~\eqref{eq:confmodel:threeregwiththreestar}] is likely more promising. We defer to future work the consideration of these questions.
%
%****************************************************************
\section{Conclusion \label{section:conclusion}}
The BTW sandpile process is an archetypal example of self-organized criticality (SOC), a term blanketing any process that has a critical point as an attractor. We argue that, although this article focuses on the BTW sandpile process on networks, the understanding and tools developed herein are applicable to the modeling of a vast array of SOC processes.

To predict the ``macroscopic observables'' of an SOC process (e.g., the exponent of the power-law tail of the cascade size distribution in the BTW process), it may suffice to build a model that reproduces the ``symmetries'' of the SOC process [e.g., the branch distribution (the probability distribution for the number of events caused by an event) should have mean one and the right tail behavior]. This approach will succeed if the resulting model falls in the same universality class as the studied SOC process; other details may not matter. We conjecture in Sec.~\ref{subsection:motivation:oneoverkisbad} and Appendix~\ref{appendix:universality} that sharing a universality class with the BTW process on networks explains the success of an earlier model~\cite{Goh2003} despite its flawed $1/k$ assumption.

In other contexts, a macroscopic understanding may not suffice because, for example, we seek quantities of a microscopic nature, and/or because the process is not critical. Instead, a microscopic understanding of the process is likely required. In the context of the BTW process, we studied the internal workings of a cascade (Sec.~\ref{section:microscopic}), which enabled the possibility to distinguish cascade size and area (Sec.~\ref{subsection:threereg:areasize} and Sec.~\ref{subsection:confmodel:areasize}).

As discussed in Sec.~\ref{subsection:motivation:selfconsistency}, it is important to acknowledge that a model of an SOC process is a different dynamical system in itself: forcing the parameters of a model to those measured in the SOC process may not be the best approach. The alternative that we recommend is to instead force the model to be self-consistent, with the hope that the model self-organizes at the right equilibrium state. Sections~\ref{subsection:threereg:changesYZ} and~\ref{subsection:threereg:bootstrap} perform this feat in the context of the BTW process on random $3$-regular networks: the resulting model predicts with appreciable accuracy nontrivial pairwise correlations in the equilibrium state, and it allows us to explore the BTW process in ranges prohibitively costly to simulate. Section~\ref{subsection:confmodel:areasize} explains how one could generalize the approach to more complex networks.

Furthermore, our method can allow one to study the effects of controlling an SOC process~\cite{Noel2013short}. Though our results were obtained in a specific context, this work serves as a proof of concept that paves the road for powerful, self-consistent and microscopically accurate models of real world systems that self-organize to critical points.

Finally, in the broader context of modeling processes taking place on random graphs, we emphasize that our use of multitype branching processes effectively allows repeated ``back-and-forth'' exchanges of information between a parent vertex and its children. In fact, the signature of a pattern associated with a vertex $v$ provide to $v$'s ancestors, to $v$'s siblings, and to the descendants of $v$'s siblings important information concerning $v$'s descendants, and vice versa~%
\footnote{To give an example, suppose that $v$ has signature $(3,2)$, i.e., $v$ receives $3$ grains from its parent, who receives $2$ grains from $v$. Through that signature, $v$'s parent effectively learns that $v$ topples at least $2$ times, which implies that $v$'s children topple at least $1$ time. Moreover, $v$ learns that all of $v$'s ancestors topple at least $3$ times, that $v$'s siblings topple at least $2$ times, and that the children of $v$'s siblings all topple at least $1$ time.}%
. We feel that this perspective is currently underused in the literature, despite its great potential to analytically model systems in which consequences spread in a bidirectional fashion.

%
%****************************************************************
\begin{acknowledgments}
The authors thank Kwang-Il Goh for useful discussion. This work was supported in part by the Defense Threat Reduction Agency Basic Grant No. HDTRA1-10-1-0088; the Army Research Laboratory Cooperative Agreement W911NF-09-2-0053; the Department of Defense (DoD) MURI award 63826-NS-MUR; the Department of Defense (DoD) through the National Defense Science \& Engineering Graduate Fellowship (NDSEG) Program (C. D. B.); and the Fonds de recherche du Qu\'ebec--Nature et technologies (FRQNT) (P.-A. N.).
\end{acknowledgments}
%
%****************************************************************
\appendix
%****************************************************************
\section{Justification of the universality conjecture \label{appendix:universality}}

This section provides informal justifications to the conjecture made in Sec.~\ref{subsection:motivation:oneoverkisbad}, namely that the reason why the relation $\tau = \gamma/(\gamma-1)$ for scale-free graphs with degree exponent $2 < \gamma < 3$~\cite{Goh2003, Lee2004, Goh2005} is verified by numerical simulations \emph{despite} the known flaws of the $1/k$-assumption is because the branching process leading to that relation belongs to the same universality class as the BTW process on networks.

We first consider a generic Galton-Watson branching process starting with a single particle. The process is completely defined by the sequence $(u_0,u_1,u_2,\ldots)$ such that $u_k$ is the probability that a particle ``disintegrates'' into $k$ new identical particles. We seek the sequence $(v_0,v_1,v_2,\ldots)$ such that $v_s$ is the probability for the total number of particles ever existing to be $s$.

We define the PGFs
\begin{equation}
  \funU(\xi) = \sum_{k=0}^\infty u_k \xi^k
  \quad \text{and} \quad
  \funV(w) = \sum_{s=0}^\infty v_s w^s . \label{eq:universality:branchingUV}
\end{equation}
The normalization of the probabilities $u_k$ forces $\funU(1) = 1$. A well known result is that if $\sum_k k u_k = 1$ [i.e., $R_0 = 1$] then:
(i) the fixed point of $\funV(w) = w \, \funU\biglb( \funV(w) \bigrb)$ determines $\funV(w)$;
(ii) the distribution $v_s$ is normalized (i.e., $\sum_{s=0}^\infty v_s = \funV(1) = 1$); and
(iii) $v_s$ asymptotically follows a power-law $v_s \sim s^{-\tau}$ (possibly with some corrections).
Note that the condition $\sum_k k u_k = 1$ may be understood as $\lim_{\xi \to 1^-} \funU'(\xi) = 1$, where the prime denotes differentiation, and where the limit is taken from the left along the real axis. The case where $u_k$ is ``light tailed'', i.e., $\funU(\xi)$ is analytic in the complex neighborhood of $\xi = 1$, leads to the ``mean field'' regime $\tau = 3/2$.

Goh et al.~\cite{Goh2003} studied the case of the ``heavy tailed'' distribution $u_k^{\text{Goh}} = \alpha k^{-\gamma}$ for $k > 0$ with $\alpha$ and $u_0^{\text{Goh}}$ chosen such that $\funU_{\text{Goh}}(1) = 1$ and $\lim_{\xi \to 1^-} \funU_{\text{Goh}}'(\xi) = 1$, i.e.,
\begin{equation}
  \funU_{\text{Goh}}(\xi) = 1 + \frac{\Li_\gamma(\xi) - \zeta(\gamma)}{\zeta(\gamma - 1)} .
\end{equation}
In this case, they showed that
\begin{equation}
  v_s^{\text{Goh}} \sim
  \begin{cases}
    s^{-\gamma/(\gamma-1)} & (2 < \gamma < 3) \\
    s^{-3/2} (\ln s)^{-1/2} & (\gamma = 3) \\
    s^{-3/2} & (\gamma > 3)
  \end{cases}
\end{equation}
(note that the distribution $u_k^{\text{Goh}}$ cannot be normalized for $\gamma \le 2$). In their demonstration, Goh et al. used the fact that the behavior of $v_s$ in the asymptotic limit $s \to \infty$ can be obtained from the singular behavior of $\funU(\xi)$ near $\xi = 1$. 

We now generalize this result. Define $\funU_{\text{Ana.}}(\xi)$ as an analytic function in the neighborhood of $\xi = 1$ respecting $\funU_{\text{Ana.}}(1) = 1$ and $\funU_{\text{Ana.}}'(1) = 1$. Also define the (possibly singular) function $\funU_{\text{Sing.}}(\xi)$ such that $\funU_{\text{Sing.}}(1) = 1$ and $\lim_{\xi \to 1^-} \funU_{\text{Sing.}}'(\xi) = 1$. For $\rho$ respecting $0 < \rho \le 1$, we see that $\widetilde{\funU}(\xi) = (1-\rho) \funU_{\text{Ana.}}(\xi) + \rho \funU_{\text{Sing.}}(\xi)$ respects $\widetilde{\funU}(1) = 1$, $\lim_{\xi \to 1^-} \widetilde{\funU}'(\xi) = 1$, and $\widetilde{\funU}(z)$ has the same singular behavior in the neighborhood of $\xi = 1$ as $\funU_{\text{Sing.}}(\xi)$. Moreover, the sequence $\widetilde{u}_k = (1-\rho) u_k^{\text{Ana.}} + \rho u_k^{\text{Sing.}}$ corresponding to this $\funU_{\text{Sing.}}(\xi)$ is a probability distribution (i.e., $0 \le \widetilde{u}_k \le 1$ for all $k \ge 0$). Hence, the branching process Eq.~\eqref{eq:universality:branchingUV} using the branching PGF $\funU_{\text{Sing.}}(\xi)$ results in $\widetilde{v}_s$ that have the same behavior in the asymptotic limit $s \to \infty$ as one would obtain with any other choice of $0 < \rho \le 1$ and/or $\funU_{\text{Ana.}}(\xi)$ respecting the aforementioned conditions.

In particular, choosing $\funU_{\text{Sing.}}(\xi) = \funU_{\text{Goh}}(\xi)$, the $\widetilde{u}_k$ obtained with such an ``analytic perturbation'' leads to the the asymptotic behavior described by Goh et al. We \emph{do not} claim that the model of Goh et al. amounts to an ``analytic perturbation'' of the actual BTW process on networks. We only want to point out that there is a continuum of models that belong to the same universality class as the model of Goh et al. for fixed $\gamma > 2$, and that it is not impossible that the BTW process on networks is one of them.
%
%****************************************************************
\section{Proofs for Sec.~\ref{subsection:microscopic:timesthatnodescantopple} \label{appendix:proofs:timesthatnodescantopple}}

This section proves statements from Sec.~\ref{subsection:microscopic:timesthatnodescantopple}, namely Theorems~\ref{theorem:constraintsroot}--\ref{theorem:constraintsnonroot} and their associated Corollary~\ref{corollary:rulespatterns}.
%
%********
Before proving Theorem~\ref{theorem:constraintsroot}, we first consider the following lemma.
\begin{lemma}[Constraints on any cascade] \label{lemma:anycascade}
In any cascade, the following statements hold for any positive integer $n$:
\begin{enumerate}
\item A non-root node cannot topple for the $n$th time before it receives $n$ grains from a single neighbor. \label{lemma:anycascade:nonrootmustreceivenbeforetoppling}
\item No node may receive $n$ grains from a single neighbor before the root topples an $n$th time. \label{lemma:anycascade:rootfirst}
\end{enumerate}
\end{lemma}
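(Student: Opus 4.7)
The plan is to prove (i) directly from a sand-balance plus pigeonhole argument at the node in question, and then deduce (ii) from (i) by a minimality argument at the earliest putative violation. Throughout, I rely on the paper's convention that nodes topple as soon as they exceed capacity and that each grain in transit takes a strictly positive amount of time to reach its destination, so that causality is well-defined.

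For (i), I would fix a non-root node $v$ of degree $k$ and hence capacity $k-1$, and suppose $v$ has toppled $n$ times by some time $t$. Each toppling sheds exactly $k$ grains, so $v$ has shed $nk$ grains by time $t$. The sand balance $s_v + r_v(t) - nk \geq 0$, combined with $s_v \leq k-1$ (since a non-root node starts with no more than its capacity), yields $r_v(t) \geq (n-1)k+1$, where $r_v(t)$ is the total number of grains $v$ has received from its neighbors by time $t$. Distributing those $(n-1)k+1$ grains among $v$'s $k$ neighbors, pigeonhole forces at least one neighbor to have contributed at least $n$ of them; equivalently, $v$ has received $n$ grains from a single neighbor no later than its $n$th toppling.

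For (ii), I would fix $n$ and suppose for contradiction that the claim fails, so that some node $v$ receives its $n$th grain from a single neighbor $u$ at some time when the root has toppled fewer than $n$ times. Among all such offending triples $(v,u,t)$ I pick one with $t$ minimal. Since $u$ must have shed that $n$th grain strictly before time $t$ (positive travel time), $u$ itself has toppled $n$ times at some time $t_u < t$. If $u$ is the root, this immediately contradicts the assumption that the root has toppled fewer than $n$ times by $t$. Otherwise $u$ is non-root, and part (i) applied to $u$ at time $t_u$ produces a neighbor $w$ of $u$ such that $u$ received $n$ grains from $w$ by some time $t' \leq t_u < t$. Since the root still has toppled fewer than $n$ times by the earlier instant $t'$, the triple $(u, w, t')$ is a strictly earlier violation, contradicting the minimality of $t$.

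The main subtlety is making sure the two parts do not circulate: part (i) must be established with no reference whatsoever to the root's toppling history, since part (ii)'s minimality step calls part (i) at the auxiliary node $u$. A related timing subtlety is that ``$u$ sent $n$ grains by time $t$'' must actually imply ``$u$ toppled $n$ times at some $t_u < t$'', which is exactly where the positive-travel-time assumption is used; without it, the minimality argument could stall at $t' = t$. Once these two points are handled, the remainder is routine bookkeeping of grains and capacity.
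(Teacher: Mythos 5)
Your proposal is correct and follows essentially the same route as the paper's proof: a grain-counting argument at the node for part (i), and a minimal-counterexample argument that invokes part (i) to push the putative violation earlier in time for part (ii), with the same reliance on positive travel times to separate $u$'s $n$th toppling from $v$'s receipt. The only cosmetic difference is that the paper phrases the counting in (i) via the extremal case of receiving exactly $n-1$ grains from every neighbor, whereas you use an explicit sand balance plus pigeonhole.
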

\begin{proof}
To prove \ref{lemma:anycascade:nonrootmustreceivenbeforetoppling}, first consider the case that a non-root node $v$ has degree $k$, begins with $k-1$ grains before the cascade, and at some time $t$ in the cascade has received exactly $n-1$ grains from each of its $k$ neighbors. Then $v$ has $k n -1$ grains initially on it and sent to it by time $t$, so $v$ has toppled at most $n-1$ times by time $t$. By construction, $v$ must receive an $n$th grain from at least one neighbor before $v$ can topple an $n$th time. If $v$ received fewer than $n-1$ grains from one or more of its neighbors by time $t$, then $v$ still cannot topple an $n$th time until it receives an $n$th grain from a single neighbor. This argument proves \ref{lemma:anycascade:nonrootmustreceivenbeforetoppling}.

To prove \ref{lemma:anycascade:rootfirst}, let $t$ be the first time in a cascade at which a non-root node receives an $n$th grain from a single neighbor. Let $v$ be one such node. Then a neighbor $u$ of $v$ toppled an $n$th time before time $t$. Suppose for contradiction that $u$ is not the root. In order to topple $n$ times before time $t$, node $u$ must have received at least $n$ grains from one of its neighbors by time $t$, which contradicts the definition of $t$. Thus $u$ must be the root, and claim \ref{lemma:anycascade:rootfirst} follows. 
\end{proof}

%********
We use Lemma~\ref{lemma:anycascade} to prove Theorem~\ref{theorem:constraintsroot} of Sec.~\ref{subsection:microscopic:timesthatnodescantopple}.
\begin{proof}[Proof of Theorem~\ref{theorem:constraintsroot}.]
To show the first claim, note that if the root is at capacity then it topples at least once, and if the root is not at capacity then it does not topple and the cascade ends there. 

To prove the rest, fix $n\geq 1$. If (a) and (b) hold, then the number of grains initially on and received by the root (including the first grain dropped on it) by time $t$ is in the interval $[kn, k(n+1)-1]$, so the root topples $n$ times by time $t$.

Inversely, if (a) does not hold, then the root topples $0<n$ times. It remains only to show that if (b) does not hold then the root does not topple $n$ times by time $t$. There are three cases. First, if the root has received $n$ grains from each of its neighbors by time $t$, then (by counting grains) we know that the root has toppled $n+1$ times by time $t$. Second, if the root has received $m>n$ many grains from a neighbor by time $t$, then that neighbor (call it $u$) must have toppled at least $m$ times by time $t$. Thus, $u$ must have received at least $m$ grains from at least one of its neighbors by time $t$. But by Lemma~\ref{lemma:anycascade}\ref{lemma:anycascade:rootfirst}, no node can receive $m$ grains from a single neighbor before the root topples $m$ times. Hence the root toppled at least $m>n$ times by time $t$. 

In the third and final case, the root has received $m<n-1$ grains from at least one neighbor by time $t$. Let $t'$ be the time when the root topples for the $n$th time. Before $t'$, no neighbor of the root can have received $n$ grains from a single neighbor by Lemma~\ref{lemma:anycascade}\ref{lemma:anycascade:rootfirst} because the root has toppled $\leq n-1$ times. Hence, by Lemma~\ref{lemma:anycascade}\ref{lemma:anycascade:nonrootmustreceivenbeforetoppling}, no neighbor of the root can have toppled $n$ times before time $t'$. Thus, the root cannot have received $\geq n$ grains from a single neighbor by time $t'$. To conclude, the number of grains initially on and received by the root (including the first grain dropped on it) by time $t'$ is $\leq kn+1+(m - n) < kn$, which contradicts to the root toppling for the $n$th time at time $t'$. This concludes the proof.
\end{proof}

%********
Lemma~\ref{lemma:Gprimetree} facilitates the proof of Theorem~\ref{theorem:constraintsnonroot}.
\begin{lemma}[Constraints on cascades that form a finite tree] \label{lemma:Gprimetree}
The following statements hold for a cascade that forms a finite tree $\mathcal{G}^\dagger$.
\begin{enumerate}
\item Let $n$ be a non-negative integer and $v$ be any node in $\mathcal{G}^\dagger$. No descendant of $v$ may receive an $n$th grain from a single neighbor before $v$ topples an $n$th time. \label{lemma:Gprimetree:ancestorfirst}
\item Let $n$ be a non-negative integer and $v$ be any non-root node in $\mathcal{G}^\dagger$. Then node $v$ cannot topple an $n$th time before receiving an $n$th grain from its parent. \label{lemma:Gprimetree:receivenfromparentbeforetopplen}
\item Let $n$ be a positive integer, and let $v$ be a non-root node that has received $n$ grains from its parent by time $t$. Then node $v$ has toppled $n$ times by time $t$ if and only if it is initially at capacity, and at the moment of toppling for the $n$th time it has received $n-1$ grains from every one of its children and $n$ grains from its parent. \label{lemma:Gprimetree:nonrootchildrennminusone}
\item Let $n$ be a positive integer, and let $v$ be a non-root node that has toppled $n$ times by time $t$. Then $v$'s parent toppled at most $n+1$ times by time $t$. \label{lemma:Gprimetree:nonrootmaxfromparent}
\item Let $v$ be any node in $\mathcal{G}^\dagger$. If $v$ is not at capacity, then $v$ topples $0$ times. \label{lemma:Gprimetree:notcapacity}
\end{enumerate}
\end{lemma}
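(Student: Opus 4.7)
The plan is to prove the five claims in the order stated, with each resting on the previous ones and on Lemma~\ref{lemma:anycascade}. The only genuinely delicate step is (i); once (i) is in hand, everything else reduces to a short grain-count.

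For (i), I would argue by extremality. Assume for contradiction that some descendant of $v$ receives an $n$th grain from a single neighbor before $v$ topples an $n$th time, and let $t_0$ be the earliest such moment, realized by a descendant $w$ receiving its $n$th grain from a tree-neighbor $u$. Since $\mathcal{G}^\dagger$ is a tree, $u$ is either a child of $w$ (hence itself a descendant of $v$) or the parent of $w$. In the first case $u$ must have toppled $n$ times by $t_0$, and being non-root Lemma~\ref{lemma:anycascade}\ref{lemma:anycascade:nonrootmustreceivenbeforetoppling} forces $u$ to have received an $n$th grain from a single neighbor strictly before $t_0$, contradicting minimality. In the second case $u$ is either $v$ itself (which is exactly the conclusion) or a proper descendant of $v$, and the same Lemma~\ref{lemma:anycascade}\ref{lemma:anycascade:nonrootmustreceivenbeforetoppling} argument applied to $u$ breaks the minimality of $t_0$ again. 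Claim (ii) then falls out immediately: by Lemma~\ref{lemma:anycascade}\ref{lemma:anycascade:nonrootmustreceivenbeforetoppling} a non-root $v$ cannot topple an $n$th time without first receiving $n$ grains from some single tree-neighbor, and (i) rules out every child of $v$, leaving only the parent.

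For (iii), let $t'$ be the moment of $v$'s $n$th toppling. Since $v$ holds exactly $\deg(v)$ grains just before toppling, and has shed $(n-1)\deg(v)$ grains in its earlier topples, grain conservation gives $c_v + \text{(grains received by }t') = n\,\deg(v)$, where $c_v$ denotes the initial grain count on $v$. Part (ii) applied to $v$ forces the parent contribution by $t'$ to be at least $n$, while the hypothesis of (iii) caps it at $n$, so it equals exactly $n$. Part (i) applied to $v$ with the same $n$, combined with Lemma~\ref{lemma:anycascade}\ref{lemma:anycascade:nonrootmustreceivenbeforetoppling}, caps the contribution from each of $v$'s $\deg(v)-1$ children by $n-1$. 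Summing and matching against $n\,\deg(v)-c_v$ squeezes every inequality into an equality: $v$ must be initially at capacity, every child must contribute exactly $n-1$, and the parent exactly $n$. The reverse direction of the iff is then tautological, since the stated equalities already put $\deg(v)$ grains on $v$ at $t'$.

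Claims (iv) and (v) are short consequences. For (iv), if $u$ is non-root, (ii) guarantees the precondition of (iii) for $u$ with parameter $m$, so (iii) forces $u$ to have received $m-1$ grains from every child by its $m$th toppling, in particular from $v$; since $v$ sent at most $T_v = n$ grains to $u$, we get $m-1 \leq n$. If $u$ is the root, Theorem~\ref{theorem:constraintsroot} gives the same ``at least $m-1$ grains from each neighbor'' bound and the conclusion is identical. Claim (v) is immediate: Theorem~\ref{theorem:constraintsroot} dispatches the root case, and for a non-root $v$, (iii) shows that any first toppling of $v$ requires $v$ to be initially at capacity. The main obstacle I anticipate is pinning down the temporal bookkeeping in (i)---a grain reaching $w$ at time $t_0$ was emitted \emph{strictly} before $t_0$, and it is precisely the tree structure of $\mathcal{G}^\dagger$ that rules out the possibility of a ``sideways'' pathway (via a child-to-child route through a cycle) that would otherwise spoil the minimality argument.
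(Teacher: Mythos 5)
Your treatment of claims \ref{lemma:Gprimetree:ancestorfirst}--\ref{lemma:Gprimetree:nonrootchildrennminusone} matches the paper's: the same minimal-time contradiction for (i), the same reduction for (ii), and the same grain-count squeeze for (iii). Two small remarks there: the converse direction of (iii) is not quite tautological, since you must still rule out an $(n+1)$th toppling of $v$ by time $t$, which the paper does by combining (ii) with the hypothesis that only $n$ grains arrived from the parent; and your ``$u$ is either a child of $w$ or the parent of $w$'' dichotomy is sound precisely because $\mathcal{G}^\dagger$ is a tree, as you note.

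The genuine gap is in your proof of (iv), and it propagates to (v). You assert that ``(ii) guarantees the precondition of (iii) for $u$ with parameter $m$.'' It does not. The hypothesis of (iii) is that the node has received \emph{exactly} $m$ grains from its parent by the relevant time; the forward direction of (iii) uses this as an \emph{upper} bound (``at most $m$ from the parent'') in the grain count. Claim (ii) supplies only the \emph{lower} bound ($u$ received at least $m$ grains from its parent before toppling an $m$th time). If $u$'s parent could have toppled $m+1$ times and delivered $m+1$ grains to $u$ by the time of $u$'s $m$th toppling, the count $m\deg(u)$ no longer forces every child of $u$ to have contributed $m-1$ grains (one child could contribute only $m-2$), and your conclusion $m-1\le n$ fails. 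Bounding the contribution of $u$'s parent from above is itself an instance of claim (iv) applied one generation higher, so your argument is circular as written. The paper closes this loop by induction on the generation of $v$: the base case treats a root parent via Theorem~\ref{theorem:constraintsroot}, and the inductive step uses the hypothesis at generation $g-1$ to cap the grains $u$ receives from its own parent at exactly $m$, which only then legitimately triggers (iii). The same issue infects your (v): ``(iii) shows that any first toppling of $v$ requires $v$ to be initially at capacity'' presumes $v$ received exactly one grain from its parent before its first toppling, which again requires a bound on the parent's toppling count --- that is, it requires (iv).
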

\begin{proof}
To show~\ref{lemma:Gprimetree:ancestorfirst}, first note that if $v$ is the root, then the claim follows directly from the analogous (but weaker) result, Lemma~\ref{lemma:anycascade}\ref{lemma:anycascade:rootfirst}, for cascades that do not form trees. To finish proving~\ref{lemma:Gprimetree:ancestorfirst}, suppose $v$ is not the root, and assume (for contradiction) that a descendant of $v$ receives an $n$th grain from a single neighbor before $v$ topples an $n$th time. Let $t$ be the first time when a descendant of $v$ receives $n$ grains from a single neighbor; call such a descendant $u$, and let $w$ be a neighbor of $u$ from which $u$ receives an $n$th grain at time $t$. Then $w$ must have toppled for an $n$th time at time $t' < t$. We know that $w$ is not $v$ because $v$ does not topple for the $n$th time before time $t$. Because $w$ is a descendant of $v$, we know $w$ is not the root, so by Lemma~\ref{lemma:anycascade}\ref{lemma:anycascade:nonrootmustreceivenbeforetoppling} we know that $w$ must have received at least $n$ grains from a single neighbor before time $t'$, which contradicts the definition of $t$. Thus, claim~\ref{lemma:Gprimetree:ancestorfirst} follows.

We show \ref{lemma:Gprimetree:receivenfromparentbeforetopplen} by contradiction. Suppose (for contradiction) that $v$ does not receive an $n$th grain from its parent by time $t$ and yet $v$ topples an $n$th time at time $t$. Before time $t$, $v$ has toppled fewer than $n$ times, so (by Lemma~\ref{lemma:Gprimetree}\ref{lemma:Gprimetree:ancestorfirst} applied to $v$) no children of $v$ have received $n$ grains from the same neighbor. Thus, no children of $v$ have toppled $n$ times by time $t$ [by Lemma~\ref{lemma:anycascade}\ref{lemma:anycascade:nonrootmustreceivenbeforetoppling}], so $v$ does not receive $n$ grains from the same child by time $t$. But by Lemma~\ref{lemma:anycascade:nonrootmustreceivenbeforetoppling} and the assumption that $v$ topples at time $t$, we know $v$ must have received $n$ grains from a single neighbor before time $t$, a contradiction. Thus, claim~\ref{lemma:Gprimetree:receivenfromparentbeforetopplen} follows. 

Claim \ref{lemma:Gprimetree:nonrootchildrennminusone} follows from counting grains of sand. Let $k$ be the degree of the non-root node $v$ that has received $n \ge 1$ grains from its parent by time $t$. Suppose that $v$ topples for an $n$th time at some time $t' \le t$. Before time $t'$, Lemma~\ref{lemma:Gprimetree}\ref{lemma:Gprimetree:ancestorfirst} guarantees that the children of $v$ toppled at most $n-1$ times, so $v$ received at most $n-1$ grains from each of its children by time $t'$. Moreover, $v$ received at most $n$ grains from its parent by time $t'$ (because it received $n$ grains from its parent by time $t \ge t'$). Because $v$ topples for the $n$th time at time $t'$, the total number of grains initially on and received by $v$ by time $t'$ should be $kn$. From the preceding constraints, this is only possible if $v$ is initially at capacity, $v$ receives $n-1$ grains from each of its children by time $t'$, and $v$ receives $n$ grains from its parent by time $t'$.

Conversely, suppose that $v$ is initially at capacity, and suppose that $t'$ is the first time such that $v$ has received $n-1$ grains from every one of its children and $n$ grains from its parent. Then, by the previous grain-counting argument, $v$ topples an $n$th time at time $t'$. It remains to be proven that $v$ does not topple again by time $t$, which is guaranteed by Lemma~\ref{lemma:Gprimetree}\ref{lemma:Gprimetree:receivenfromparentbeforetopplen} because $v$ received only $n$ grains from its parent by time $t$. Thus claim~\ref{lemma:Gprimetree:nonrootchildrennminusone} holds.

We show\ \ref{lemma:Gprimetree:nonrootmaxfromparent} by induction over the generation $g \ge 1$ of the non-root $v$ that topples $n \ge 1$ times by time $t$. Let $u$ be the parent of $v$. Suppose $u$ is the root (i.e., $g = 1$), and suppose (for contradiction) that $u$ topples $m \geq n+2$ times by time $t$. By Theorem~\ref{theorem:constraintsroot}, we know that $u$ has received either $m$ or $m-1$ grains from each of its children by time $t$, including node $v$. However, by assumption, $v$ topples $n \le m-2$ times by time $t$, so $u$ receives $\leq m-2$ grains from $v$ by time $t$, a contradiction. Thus, claim\ \ref{lemma:Gprimetree:nonrootmaxfromparent} holds for $g=1$.

Now suppose a node $v$ at generation $g>1$ topples $n \geq 1$ times by time $t$, and assume that the claim holds at generation $g-1$. Suppose (for contradiction) that $v$'s parent, $u$, toppled $m \geq n+2$ times by time $t$, and let $t' \le t$ be the moment when $u$ topples for the $m$th time. Before time $t'$, $u$ toppled at most $m-1$ times, so by the inductive hypothesis we know that $u$'s parent has toppled $\leq m$ times. Moreover, by Lemma~\ref{lemma:Gprimetree}\ref{lemma:Gprimetree:receivenfromparentbeforetopplen}, $u$ receives at least $m$ grains from its parent by time $t'$ because its parent topples an $m$th time. Thus, $u$ receives exactly $m$ grains from its parent by time $t'$, and $u$ topples an $m$th time at time $t'$, so we can apply Lemma~\ref{lemma:Gprimetree}\ref{lemma:Gprimetree:nonrootchildrennminusone} to $u$ to conclude that $u$ must have received $m-1$ grains from each of its children (including node $v$) by time $t'$. But $v$ has toppled $n \le m-2$ times by time $t$, so there is no time $t' \le t$ at which $u$ receives an $(m-1)$th grain from $v$, a contradiction. Thus, \ref{lemma:Gprimetree:nonrootmaxfromparent} follows by induction on $g$. 

Claim \ref{lemma:Gprimetree:notcapacity} is already shown if $v$ is the root (Theorem~\ref{theorem:constraintsroot}); here we show the case in which $v$ is non-root by contradiction. If $v$ is not at capacity, then it must receive at least $2$ grains (from any source) before toppling. Consider (for contradiction) the time $t' \le t$ at which $t$ receives a second grain. Before $t'$, $v$ has toppled $0$ times, so $v$ cannot receive grains from its children [by Lemma~\ref{lemma:Gprimetree}\ref{lemma:Gprimetree:ancestorfirst}] and $v$ cannot have received more than one grain from its parent [because $v$'s parent cannot have toppled more than once by Lemma~\ref{lemma:Gprimetree}\ref{lemma:Gprimetree:nonrootmaxfromparent}]. Hence, $v$ cannot receive a second grain by time $t'$, a contradiction. So claim \ref{lemma:Gprimetree:notcapacity} holds.
\end{proof}

%********
We use Lemmas~\ref{lemma:anycascade}--\ref{lemma:Gprimetree} and Theorem~\ref{theorem:constraintsroot} to prove Theorem~\ref{theorem:constraintsnonroot} of Sec.~\ref{subsection:microscopic:timesthatnodescantopple}.
\begin{proof}[Proof of Theorem~\ref{theorem:constraintsnonroot}.]
We first prove the first claim concerning $v$ toppling $0$ times. By Lemma~\ref{lemma:Gprimetree}\ref{lemma:Gprimetree:notcapacity}, $v$ topples $0$ times if $v$ is not initially at capacity. Now suppose that the non-root node $v$ receives $0$ grains from its parent by time $t$. By Lemma~\ref{lemma:Gprimetree}\ref{lemma:Gprimetree:ancestorfirst} with $n=0$, $v$ cannot receive a grain from one of its children by time $t$. Thus, $v$ receives a total of $0$ grains by time $t$ and hence topples $0$ times by time $t$.

Inversely, if $v$ is initially at capacity and $v$ receives at least one grain from its parent by time $t$, then $v$ clearly topples at least once by time $t$. This concludes the proof of the claim for $v$ toppling $0$ times.

To prove the second claim, first suppose that conditions (a), (b) and (c) hold. Let $k$ be the degree of $v$. By counting grains, we see that the number of grains initially on $v$ and received by $v$ by time $t$ is in the interval $[kn, k(n+1)-1]$, so $v$ topples $n$ times by time $t$.

To show the converse, suppose one of (a), (b) or (c) does not hold, and suppose (for contradiction) that $v$ topples $n$ times by time $t$. If (a) does not hold (i.e., if $v$ is not initially at capacity), then $v$ topples $0$ times by time $t$ [Lemma~\ref{lemma:Gprimetree}\ref{lemma:Gprimetree:notcapacity}], a contradiction. 

Next consider the two cases in which (b) does not hold. If $v$ received fewer than $n$ grains from its parent by time $t$, then by Lemma~\ref{lemma:Gprimetree}\ref{lemma:Gprimetree:receivenfromparentbeforetopplen} $v$ topples fewer than $n$ times by time $t$, a contradiction. If $v$ received $>n+1$ grains from its parent by time $t$ and if $v$'s parent were the root, then by Theorem~\ref{theorem:constraintsroot} node $v$ would necessarily topple $\geq n+1$ times by time $t$, a contradiction. Finally, if $v$ received $> n+1$ grains from its parent by time $t$ and if $v$'s parent were not the root, then by part (c) of this theorem applied to the parent of $v$, we know that $v$ topples $\geq n+1$ times by time $t$, a contradiction. 

To finish the proof, consider the three ways in which (c) may not hold. If $v$ received fewer than $n-1$ grains from any of its children by time $t$, then $v$ toppled fewer than $n$ times by time $t$ [by Lemma~\ref{lemma:Gprimetree}\ref{lemma:Gprimetree:nonrootchildrennminusone}], a contradiction. If $v$ received more than $n$ grains from any of its children by time $t$, then $v$ toppled more than $n$ times by time $t$ [because Lemma~\ref{lemma:Gprimetree}\ref{lemma:Gprimetree:receivenfromparentbeforetopplen} implies that such a child must have received more than $n$ grains from its parent, i.e., from $v$], a contradiction. In the last case, $v$ received $n+1$ grains from its parent and $n$ grains from each of its children. Let $k$ be the degree of $v$. By counting grains, we see that the number of grains initially on $v$ and received by $v$ by time $t$ is $k(n+1)$, so $v$ topples $n+1$ times by time $t$, a contradiction. Hence the second claim holds, which completes the proof.
\end{proof}

%********
Finally, we use Theorems~\ref{theorem:constraintsroot}--\ref{theorem:constraintsnonroot} to prove Corollary~\ref{corollary:rulespatterns} of Sec.~\ref{subsection:microscopic:timesthatnodescantopple}.
\begin{proof}[Proof of Corollary~\ref{corollary:rulespatterns}]
We prove claims~\ref{pattern:rootnotatcapacity}--\ref{pattern:nonrootatcapacitydissipatesout} by letting $t$ be some time after the cascade finishes and by applying Theorems~\ref{theorem:constraintsroot}--\ref{theorem:constraintsnonroot}. 

We first consider the cases in which $v$ topples $0$ times. Suppose $v$ is the root. By Theorem~\ref{theorem:constraintsroot}, point~\ref{pattern:rootnotatcapacity} is necessary and sufficient for $v$ to topple $0$ times.

Now suppose $v$ is non-root and topples $0$ times. Because $v$ topples zero times, $v$'s parent receive $0$ grains from $v$, so the parent toppled at most $1$ time (by Theorem~\ref{theorem:constraintsroot} or Theorem~\ref{theorem:constraintsnonroot} if the parent is the root or not, respectively), and at least one time (otherwise $v$ would not be in $\mathcal{G}'$). Thus, the parent of $v$ has toppled exactly one time. Hence the signature of $v$ is $(1,0)$. By Theorem~\ref{theorem:constraintsnonroot}, $v$ is not initially at capacity, and/or $v$ receives $0$ grains from its parent during the cascade. These conditions leave only $3$ possibilities: $v$ is not initially at capacity and receives the grain sent by its parent [point~\ref{pattern:nonrootnotatcapacityreceives}]; $v$ is not initially at capacity and the grain dissipates [point~\ref{pattern:nonrootnotatcapacitydissipates}]; or $v$ is initially at capacity and the grain dissipates [point~\ref{pattern:nonrootatcapacitydissipates}]. There are no other possibilities.

Next consider the cases in which $v$ topples $n \ge 1$ times. If $v$ is the root, then point (v) is equivalent to Theorem~\ref{theorem:constraintsroot} and the fact that $v$'s children are the same as $v$'s neighbors. 

Now suppose $v$ is not the root. Let $(n',m')$ be the signature of $v$; let $m$ be the amount of sand received by $v$ from its parent; and let $l_c$ be the amount of sand received by $v$ from one of its children $c$ [that child thus has signature $(n,l_c)$]. By Theorem~\ref{theorem:constraintsnonroot}, it is necessary and sufficient that: $v$ was initially at capacity, $v$ received $n$ or $n+1$ grains from its parent (so $m \ge n$), and $v$ received from each of its children $n$ or $n-1$ grains (so $n \ge l_c \ge n-1$ for every child $c$ of $v$), except that $v$ cannot receive $n$ grains from all of its children if it received $n+1$ grains from its parent. Moreover, $v$ cannot receive more grains from its parent than the number of times the parent toppled (so $n' \ge m$); $v$'s parent cannot receive more grains from $v$ than the number of times $v$ toppled (so $n \ge m'$); and $v$'s parent must receive at least $n'-1$ grains from $v$ (by Theorem~\ref{theorem:constraintsroot} or Theorem~\ref{theorem:constraintsnonroot} if the parent is the root or not, respectively, so $m' \ge n'-1$). Grain exchanges between $v$ and its parent may thus be summarized as $n' \ge m \ge n \ge m' \ge n'-1$, which leaves $4$ possibilities (i.e., $4$ possible positions of the ``$>$'' symbol): $n' > m = n = m' = n'-1$ [the last grain sent by the parent of $v$ toward $v$ dissipated, point~\ref{pattern:nonrootatcapacitydissipatesin}], $n' = m > n = m' = n'-1$ [no dissipation, $v$ cannot receive $n$ grains from all its children, point~\ref{pattern:nonrootatcapacityreceivesspecial}], $n' = m = n > m' = n'-1$ [the last grain sent by $v$ toward its parent dissipated, point~\ref{pattern:nonrootatcapacitydissipatesout}], and $n' = m = n = m' > n'-1$ [no dissipation, point~\ref{pattern:nonrootatcapacityreceivesnormal}]. There are no other possibilities.
\end{proof}
%
%****************************************************************
\section{Proofs for Sec.~\ref{subsection:microscopic:areapercolation} \label{appendix:proofs:areapercolation}}

This section proves statements from Sec.~\ref{subsection:microscopic:areapercolation}, namely Theorem~\ref{theorem:AAtilde} and its associated Corollary~\ref{corollary:areabond}.
\begin{proof}[Proof of Theorem~\ref{theorem:AAtilde}.]
If $v$ is not initially at capacity, then it does not topple, so $\mathcal{G}^\dagger$ contains only $v$ (a valid tree), $\mathcal{G}'$ is empty and $\mathcal{A} = 0$. By definition, $v$ does not belong to $\widetilde{\mathcal{G}}$ because $v$ is not at capacity, so $\widetilde{\mathcal{G}}'$ is also empty (a valid tree) and $\widetilde{\mathcal{A}} = 0$.

Otherwise, suppose $v$ is at capacity. Then $v$ topples at least once in the cascades on both $\mathcal{G}$ and $\widetilde{\mathcal{G}}$, so $v$ is in all of the graphs $\mathcal{G}^\dagger$, $\mathcal{G}'$, and $\widetilde{\mathcal{G}}'$. By definition, the number of nodes in $\widetilde{\mathcal{G}}'$ equals the area $\widetilde{\mathcal{A}}$ of the cascade on $\widetilde{\mathcal{G}}$. Since $\widetilde{\mathcal{A}} < M$ and $(\mathcal{G},M,v)$ is good, we know that $(\mathcal{G},\widetilde{\mathcal{A}},v)$ is good, too. Also $(\widetilde{\mathcal{G}},\widetilde{\mathcal{A}},v)$ and $(\widetilde{\mathcal{G}}',\widetilde{\mathcal{A}},v)$ are both good, because $\widetilde{\mathcal{G}}'$ is a subgraph of $\widetilde{\mathcal{G}}$ and because $\widetilde{\mathcal{G}}$ is a subgraph of $\mathcal{G}$. Since $\widetilde{\mathcal{A}} < M$ and $(\widetilde{\mathcal{G}}', \widetilde{\mathcal{A}},v)$ is good, we know that $\widetilde{\mathcal{G}}'$ is a forest. Moreover, because the cascade spreads between adjacent nodes starting from the root $v$, $\widetilde{\mathcal{G}}'$ is connected, and thus $\widetilde{\mathcal{G}}'$ is a tree. 

It remains only to show that the cascade forms a finite tree $\mathcal{G}^\dagger$, because that conclusion will imply the equalities $\mathcal{G}' = \widetilde{\mathcal{G}}'$ and $\mathcal{A} = \widetilde{\mathcal{A}}$. First we show that $\mathcal{G}'$ is a subgraph of $\mathcal{G}^\dagger$. Observe that $\mathcal{G}^\dagger$ can be obtained by the union of the root $v$, the graph $\mathcal{G}'$, the neighbors of nodes that topple, and the links joining nodes that topple to their neighbors. Since there are no links between nodes that do not topple, we may consider individually each additional node $u$ adjacent to a node that topples such that $u$ itself does not topple. Fix one such $u$. We define $\widetilde{\mathcal{G}}''(u)$ to be the subgraph of $\mathcal{G}$ induced by $\{u\} \cup \{\text{all nodes of } \widetilde{\mathcal{G}}' \}$. Now $\widetilde{\mathcal{G}}''(u)$ contains $\widetilde{\mathcal{A}} + 1 \le M$ nodes, so $(\widetilde{\mathcal{G}}''(u),\widetilde{\mathcal{A}}+1,v)$ is good. Thus, $u$ is a leaf of the tree $\widetilde{\mathcal{G}}'$. Hence, the presence of node $u$ in $\mathcal{G}^\dagger$ (together with its single link to a node that topples) cannot introduce a cycle in $\mathcal{G}^\dagger$, so $\mathcal{G}^\dagger$ is a tree and the cascade forms a finite tree $\mathcal{G}^\dagger$. Finally, Theorem~\ref{theorem:constraintsnonroot} guarantees that nodes not at capacity do not topple, so $\mathcal{G}' = \widetilde{\mathcal{G}}'$, and thus $\mathcal{A} = \widetilde{\mathcal{A}}$.
\end{proof}

\begin{proof}[Proof of Corollary~\ref{corollary:areabond}.]
The case $x=0$ is trivial: $\mathcal{A} = 0$ if and only if $v$ is not at capacity, which is true if and only if $v$ belongs to a component of size $0$ in $\widetilde{\mathcal{G}}^{(1-\epsilon)}$.

Otherwise suppose $v$ is at capacity. Then $v$ topples at least once. Let $\widetilde{\mathcal{C}}_v^{\kern2pt (1-\epsilon)}$ denote the connected component to which $v$ belongs in $\widetilde{\mathcal{G}}^{(1-\epsilon)}$. If the number of nodes in $\widetilde{\mathcal{C}}_v^{\kern2pt (1-\epsilon)}$ is $< M$, then because $(\mathcal{G},M,v)$ is good, we know that $\widetilde{\mathcal{C}}_v^{\kern2pt (1-\epsilon)}$ is a tree. Moreover, by Theorem~\ref{theorem:AAtilde}, for $x<M$ we have $\mathcal{G}' = \widetilde{\mathcal{G}}'$ and $\mathcal{A} = \widetilde{\mathcal{A}}$. Thus, for $x<M$, $\widetilde{\mathcal{C}}_v^{\kern2pt (1-\epsilon)}$ is built in the same way as the cascade grows on the subgraph $\widetilde{\mathcal{G}}$ of nodes at capacity (with nodes not at capacity treated as sinks). Specifically, because $x<M$ and $(\mathcal{G},M,v)$ is good, both $\widetilde{\mathcal{C}}_v^{\kern2pt (1-\epsilon)}$ and the subgraph $\mathcal{G}'$ of $\mathcal{G}$ induced by nodes that topple are built by adjoining at-capacity neighbors of leaves independently with probability $1-\epsilon$. Thus, for $x < M$, the chance that $\widetilde{\mathcal{C}}_v^{\kern2pt (1-\epsilon)}$ has $x$ many nodes equals the chance that the area $\mathcal{A}=x$. 
\end{proof}
%
%****************************************************************
%\bibliography{../btw_network_bibliography}
%****************************************************************
%Merlin.mbs v4.21 2009-07-09.
%
%
%****************************************************************
\end{document}